\pgfplotsset{compat=1.17}
\definecolor{dgreen}{rgb}{0.0, 0.5, 0.0}
\def \epst{0.25}%
\def \rhot{4}%
\def \xBound{1.5}%
\DeclareMathOperator*{\argmin}{arg\,min}
\newtheorem{assumption}{Assumption}
\newtheorem{prop}{Proposition}
\newcommand{\qed}{\hfill \ensuremath{\blacksquare}}
 \newcommand{\extAppendix}[1]{Appendix~{#1}}
 \newcommand{\refExAppConstraintViolation}{\ref{ex:app_constraintViolation}}
 \newcommand{\refAppProofOfStuff}{\ref{app:proof_of_stuff}}
 \newcommand{\refAppProveLipschitz}{\ref{app:ProveLipschitz}}
 \newcommand{\refSubecAppProofLemma}{\ref{subsec:app_ProofLemma}}
 \newcommand{\refEqAppScmpcDescent}{\eqref{eq:app_SCMPC_descend}}
 \newcommand{\refEqAppInterLemmaHelper}{\eqref{eq:app_InterLemmaHelper2}}
\title[Learning Soft Constrained MPC Value Functions]{Learning Soft Constrained MPC Value Functions: Efficient MPC Design and Implementation providing Stability and Safety Guarantees}
\author{%
 \Name{Nicolas Chatzikiriakos$^{\star, 1}$} \Email{nicolas.chatzikiriakos@ist.uni-stuttgart.de}\\
 \Name{Kim P. Wabersich$^{\star, 2}$} \Email{kimpeter.wabersich@bosch.com}\\
 \Name{Felix Berkel$^{2}$} \Email{felix.berkel@bosch.com}\\
 \Name{Patricia Pauli$^{1}$} \Email{patricia.pauli@ist.uni-stuttgart.de}\\
 \Name{Andrea Iannelli$^{1}$} \Email{andrea.iannelli@ist.uni-stuttgart.de}\\
 \addr $*)$ The first two authors contributed equally to this work \\ $^{1})$ Institute for Systems Theory and Automatic Control, University of Stuttgart,
Germany\\ $^{2})$ Corporate Research Robert Bosch GmbH, 71272 Renningen, Germany  
}
\begin{document}

\maketitle

\begin{abstract}%
    Model Predictive Control (MPC) can be applied to safety-critical control problems, providing closed-loop safety and performance guarantees. Implementation of MPC controllers requires solving an optimization problem at every sampling instant, which is challenging to execute on embedded hardware. To address this challenge, we propose a framework that combines a tightened soft constrained MPC formulation with supervised learning to approximate the MPC value function. This combination enables us to obtain a corresponding optimal control law, which can be implemented efficiently on embedded platforms.
    The framework ensures stability and constraint satisfaction for various nonlinear systems. While the design effort is similar to that of nominal MPC, the proposed formulation provides input-to-state stability (ISS) with respect to the approximation error of the value function. Furthermore, we prove that the value function corresponding to the soft constrained MPC problem is local Lipschitz continuous for Lipschitz continuous systems, even if the optimal control law may be discontinuous. This serves two purposes: First, it allows to relate approximation errors to a sufficiently large constraint tightening to obtain constraint satisfaction guarantees. Second, it paves the way for an efficient supervised learning procedure to obtain a continuous value function approximation. We demonstrate the effectiveness of the method using a nonlinear numerical example.
\end{abstract}

\begin{keywords}%
  Model predictive control, neural network controllers, approximate optimal control
\end{keywords}

\section{Introduction}
\Ac{MPC} is a technique that can provide a nearly optimal control policies subject to state and input constraints, which has seen significant success in many areas such as autonomous driving \citep{hrovat2012development} or power electronics \citep{karamanakos2020model}. 
In \ac{MPC}, a state measurement is used to predict and optimize the system's future evolution through an optimization problem, often referred to as the \ac{MPC} problem. 
The first element of the optimal control sequence resulting from solving the \ac{MPC} problem is then applied to the system, which is repeated at every time step. 
A key challenge is solving the \ac{MPC} problem with sufficient accuracy in real-time \citep{karamanakos2020model} while considering the demanding implementation in safety-critical systems such as vehicles and airplanes. To overcome this problem, explicit parametric solutions and approximations of the implicit control law, i.e., the mapping from the current state to the first element of the optimal control sequence, have been proposed. 
\par
The basic case of linear quadratic \ac{MPC} problems can, e.g., be reformulated as a multiparametric program, which can be solved offline for all possible system states. The result is an exact solution of the \ac{MPC} problem represented by a piecewise affine mapping \citep{bemporad2002explicit}.
While cheap to evaluate online, such explicit \ac{MPC} methods quickly lead to intractable offline computations as the optimization problem size increases for practically relevant system dimensions and planning horizons~\citep{alessio2009survey}.
More recently, function approximators, such as artificial neural networks, have been used to obtain representations of the implicit MPC control law using supervised learning to additionally cover large-scale and nonlinear systems \citep{lucia2018deep, hertneck2018learning}, which has also been succesfully demonstrated in real-world applications~\citep{nubert2020safe,abu2022deep}.
One approach to certifying desired closed-loop properties is to treat approximation errors as perturbations, allowing the use of existing methods to provide, e.g., constraint satisfaction guarantees. 
Often, such approaches require `sufficient' accuracy in combination with potentially complicated robust MPC designs, which can be challenging to obtain. This becomes particularly relevant in the case of nonlinear \ac{MPC} problems, which typically have discontinuous control mappings~\citep{rawlings2017model} that must be approximated. As a result, systematic errors are introduced whenever a continuous function class, such as a typical artificial neural network, is used for supervised learning. 

\textit{Contributions:} Instead of approximating a potentially \textit{discontinuous} MPC control mapping
\begin{equation}\label{eq:approximate_control_law_directly}
        \tilde u(x) \approx u^*(x) = \argmin ~\textit{MPC objective} \quad \text{s.t.} \quad \textit{MPC constraints}, 
\end{equation}
with symbols ``$u^*(x)$'' and ``$\tilde u(x)$'' representing the optimal and approximate MPC policy at system state ``$x$'', we design and approximate a \textit{continuous} soft constrained MPC value function of the form
\begin{equation}\label{eq:approximate_smpc_value_function}
\tilde V^{\mathrm{s}}(x) \approx V^{*,\mathrm{s}}(x) = \min \textit{MPC objective + slack penalty} ~ \text{s.t.} ~ \textit{tightened MPC soft constraints}.
\end{equation}
Application of a dynamic programming control law~\citep{bertsekas2022abstract} using the approximate value function $\tilde V^{\mathrm{s}}(x)$ resembles an approximate MPC controller, which avoids demanding online optimizations.
To ensure that the MPC value function $V^{*,\mathrm{s}}(x)$ is continuous while achieving desirable closed-loop properties, several modifications are employed to the underlying MPC problem.
Combining a simple constraint tightening with a soft constrained MPC formulation~\citep{kerrigan2000soft} guarantees constraint satisfaction and stability by reflecting potentially unsafe states through large cost values. To this end, the linear soft constrained MPC scheme in~\cite{Zeilinger2010SCMPC} is extended to the \emph{nonlinear} case, and local Lipschitz continuity of the resulting MPC value function is established for Lipschitz continuous system dynamics, positive definite stage costs, and polytopic constraints.
Compared to existing approaches, our modifications do not rely on potentially complicated design procedures using robust MPC techniques, which are difficult to apply to nonlinear control problems.
To address potential difficulties in approximating value functions with extreme curvatures when the soft constraints become active, we further exploit the structural properties of the soft constrained MPC problem. These properties allow a separation into smooth performance and safety value functions, which can be efficiently approximated using supervised learning.

\textit{Related work:} 
In the case of linear MPC, \cite{karg2020efficient} specify requirements for a neural network to enable an accurate approximation of the MPC control law.
Approximating nonlinear MPC problems using neural networks is considered by \cite{hertneck2018learning}, who treat errors through a robust \ac{MPC} design to maintain desirable closed-loop properties.
\cite{drgovna2022differentiable} train a neural network using an \ac{MPC}-inspired cost, including penalties on constraint violations. 
To obtain robustness and safety guarantees, \cite{Drgona2022CBF} extend this approach by a particular type of control barrier function, which can be challenging to obtain in the case of nonlinear and large-scale systems.
The work of \cite{rosolia2017Racing} uses a sampled safe set to approximate the MPC value function based on previous trajectories. The extension proposed by~\cite{wabersich2022SCMPC_Learning} additionally considers the integration of constraint violations using soft constrained MPC and exact penalty functions, providing the conceptual foundation of this work.
Since we derive the proposed control law from an approximated value function, there are apparent connections to approximate dynamic programming \citep{powell2007approximate}, where, e.g., \cite{granzotto2021stop} consider how the approximation error in the optimal value function affects the stability of the closed-loop. 
\paragraph{Outline:} The problem formulation and MPC is introduced in Section~\ref{sec:Preliminaries}. Section~\ref{sec:smpc} presents a reformulation of the standard MPC problem using soft constraints. This reformulation includes constraint penalties into the value function, enabling its use in a dynamic programming control law approximating the MPC control law. Section~\ref{sec:approx_mpc_analysis} discusses this control law in detail. To ensure stability and constraint satisfaction, we establish a relation between the effect of approximation errors of the MPC value function and a simple linear tightening of the softened state constraints. This mechanism is formalized in Section~\ref{subsec:stability} and \ref{subsec:constraint_satisfaction}. In Section~\ref{sec:mpc_value_function_approximation}, we provide an efficient technique to perform the data labeling and supervised learning task to obtain an approximation of the soft constrained value function~\eqref{eq:approximate_smpc_value_function}. Section~\ref{sec:numerical_example} concludes with a numerical example.
\paragraph{Definitions:}
A continuous function $\alpha:[0, a) \to [0, \infty)$ is called a $\mathcal{K}$-function if it is strictly increasing and $\alpha(0) = 0$. It is a $\mathcal{K}_\infty$-function if $a = \infty$ and $\alpha(r) \stackrel{r\to \infty}{\longrightarrow} \infty$. A continuous function $\beta:[0, a) \times [0, \infty) \to [0, \infty)$ is a $\mathcal{KL}$-function if for each fixed $s$, $\beta(r,s)$ is a $\mathcal{K}$-function and for each fixed $r$, $\beta(r,s)$ is decreasing w.r.t. $s$ and $\beta(r,s)\stackrel{s\to \infty}{\longrightarrow} 0$.
\section{Problem Formulation and MPC Background}\label{sec:Preliminaries}
We consider discrete-time, nonlinear systems of the form
\begin{equation}
	x(k+1) = f(x(k),u(k)), \label{eq:system_model}
\end{equation}
with state $x(k)\in \mathbb{R}^{n_x}$, input $u(k) \in \mathbb{R}^{n_u}$ and time index $k \in \mathbb{N}$. We assume that $f: \mathbb{R}^{n_x}\times \mathbb{R}^{n_u} \to \mathbb{R}^{n_x}$ is Lipschitz continuous on bounded sets. System~\eqref{eq:system_model} is assumed to have an equilibrium at the origin, i.e., $0 = f(0,0)$. 
The system is subject to polytopic state-input constraints of the form 
\begin{align}\label{eq:constraints}
  x(k) & \in \mathcal X \triangleq \left\{x \in \mathbb{R}^{n_x} | H_x x \leq \mathds{1} \right\}, \quad u(k) \in \mathcal U \triangleq \left\{u \in \mathbb{R}^{n_u} |H_u u \leq \mathds{1} \right\},
\end{align}
where $H_x \in \mathbb{R}^{m_x \times n_x}$,$H_u \in \mathbb{R}^{m_u \times n_u}$, and $\mathds{1}$ is a vector of ones of appropriate dimension.
While we focus on constraints of the form~\eqref{eq:constraints} for simplicity, our results can be extended to compact constraints of the form $\{(x,u)\in\mathbb R^{n_x} \times \mathbb R^{n_u} | c_x(x) \leq \mathds{1}, c_u(u)\leq \mathds{1}\}$ containing the origin with $c_x, c_u$ Lipschitz continuous.
Given a control objective in the form of minimizing $\Sigma_{k=0}^{\infty} \ell(x(k),u(k))$, with stage cost function $\ell (x,u) = \Vert x \Vert_Q^2 + \Vert u \Vert_R^2$ and weighting matrices $Q \succeq 0$ and $R \succ 0$, MPC provides a principled method for controller design.
At every time step $k$, the system state $x(k)$ is measured and an MPC problem of the following form is solved:
\begin{subequations}
    \begin{alignat}{3}
        V^{*}(x(k)) = &\min_{u_{\cdot \vert k}} && J(x(k), u_{\cdot \vert k}) = V_\mathrm{f}(x_{N \vert k}) + &\sum_{i=0}^{N-1} \ell(x_{i\vert k}, u_{i\vert k})
        \label{eq:MPC_cost}\\
        &\quad \text{s.t. } \quad && x_{0\vert k} = x(k) &\label{eq:MPC_IC} \\
        & && x_{i+1\vert k} = f(x_{i\vert k}, u_{i\vert k})  &\forall i \in [0, N-1]\label{eq:MPC_dyn}\\
        & &&u_{i\vert k} \in \mathcal{U}  &\forall i \in [0, N-1]\label{eq:MPC_input}\\
        & && H_x x_{i\vert k} \le \mathds{1}(1 - \eta)  &\forall i \in [0, N-1]
        \label{eq:MPC_state}\\
        & && x_{N\vert k} \in \bar{\mathcal{X}}_\mathrm{f}, \label{eq:MPC_Terminal} 
    \end{alignat}\label{eq:MPC_problem}%
\end{subequations}
where we optimize over an input and state sequence $\{u_{i\vert k}\}_{i=0}^{N-1}$ and $\{x_{i\vert k}\}_{i=0}^{N}$, with $i$ denoting the prediction time step. The resulting MPC control law is then given by
\begin{equation}\label{eq:MPC_policy}
    \pi_{\mathrm{MPC}}(x(k)) = u^*_{0\vert k}(x(k)),~\text{defined on}~
\mathcal{X}^\eta_N \triangleq \{x(k) \in \mathbb{R}^{n_x} \vert \exists u_{\cdot\vert  k} \text{ s.t. } ~\eqref{eq:MPC_IC} -~\eqref{eq:MPC_Terminal}\},
\end{equation}
with $u^*_{0|k}(x(k))$ denoting a solution to~\eqref{eq:MPC_problem}.
In~\eqref{eq:MPC_cost} we consider a finite sum of $N$ predicted time steps as the objective with an additional terminal cost term $V_{\mathrm f}$, bounding neglected future stage cost terms.
Equation~\eqref{eq:MPC_Terminal} enforces a terminal constraint with $\bar {\mathcal{X}}_\mathrm{f} =\{x\in \mathbb{R}^{n_x} \vert x^\top \bar P x \le \bar 
 h_\mathrm{f} \}$, with $\bar h_\mathrm{f} >0$ and $\bar P \succ 0$. For the case $\eta=0$ in~\eqref{eq:MPC_state}, problem~\eqref{eq:MPC_problem} reduces to a common MPC formulation~\citep{MAYNE2000MPC, rawlings2017model} and provides rigorous constraint satisfaction and asymptotic stability with respect to the origin if the following standard assumption is satisfied.
\begin{assumption}\label{ass:TerminalSetMPC}
    Consider a terminal set $\bar{\mathcal{X}}_\mathrm{f}\subseteq \mathcal X$, $0 \in \bar{\mathcal{X}}_\mathrm{f}$ and a terminal cost function $V_\mathrm{f}: \bar{\mathcal{X}}_\mathrm{f} \to \mathbb R_{\ge 0}$.  There exists a terminal control law $u = K(x)$ such that for all $x \in \bar{\mathcal{X}}_\mathrm{f}$ it holds that $f(x, K(x)) \in  \bar{\mathcal{X}}_\mathrm{f}$, $u = K(x) \in \mathcal U$, and $V_\mathrm{f}(f(x, K(x))) - V_\mathrm{f}(x) \le - \ell(x, K(x))$.
\end{assumption}
In addition to standard MPC, problem~\eqref{eq:MPC_problem} includes a simple constraint tightening factor $\eta \in [0, 1)$ in~\eqref{eq:MPC_state}, which will be used to achieve constraint satisfaction despite approximation errors.
%
\section{Incorporating constraint penalties into value functions using soft constraints}\label{sec:smpc}
State and input constraints are not part of the objective function in classical MPC. Direct application of the dynamic programming control law $\argmin_{u\in\mathcal U} \ell(x(k), u) + V^*(f(x(k),u))$ can therefore lead to constraint violations in many cases, see \extAppendix{\refExAppConstraintViolation} for an illustrative example.
In contrast, the soft-constrained MPC formulation in~\cite{zeilinger2014SCMPC} uses large slack penalties on slack variables in the objective. While the original motivation is to maintain recursive feasibility of the MPC problem even for infeasible states $x(k)$ through softening the constraints, we combine this mechanism with a small but non-zero constraint tightening $\eta>0$ in~\eqref{eq:MPC_problem}.  As a result, constraint violations of the original constraints become visible in the soft constrained MPC value function and can therefore be prevented by design.
In addition, we can exploit the theoretical properties, such as stability and robustness of the original method to derive desired properties in a second step.
Transferring the method provided by~\cite{zeilinger2014SCMPC} to the nominal nonlinear MPC problems~\eqref{eq:MPC_problem} yields the soft constrained MPC problem of the form
\begin{subequations}
	\begin{alignat}{2}
	V^{*, \mathrm{s}}(x(k)) = \min_{u_{\cdot|k}, \xi_{\cdot\vert k}, \alpha } & \quad J(x(k), u_{\cdot \vert k}) + \ell_\xi(\xi_{N\vert k})+ \sum_{i=0}^{N-1} \ell_\xi(\xi_{i\vert k} + \xi_{N\vert k}) \label{eq:SCMPC_Cost} &&\\
	\text{s.t. }    & \quad \eqref{eq:MPC_IC} - \eqref{eq:MPC_input}, \quad 0 \le \alpha \le 1 &&\label{eq:SCMPCconstr_scaling}  \\
                    & \quad x_{N|k} \in \alpha \mathcal X_\mathrm{f}&&\label{eq:SCMPCconstr_Terminal}\\
                    & \quad \alpha \mathcal{X}_\mathrm{f} \subseteq \{x\in\mathbb R^n|H_x x  \le \mathds{1}(1 - \eta)  + \xi_{N|k} \} \label{eq:SCMPCconstr_TerminalCoupling} \\
                    & \quad H_x x_{i|k} \leq  \mathds{1}(1 - \eta)  + \xi_{i|k} + \xi_{N|k} \quad &&\forall i\in [0, N-1]\label{eq:SCMPCconstr_state} \\
                    & \quad 0 \le \xi_{i \vert k} \quad &&\forall i\in [0, N], \label{eq:SCMPCconstr_Slack}
	\end{alignat} \label{eq:SCMPC_Problem_full}%
\end{subequations}
where the optimization variables additionally include the predicted slack variable sequence $\{\xi_{i|k}\}_{i=0}^{N}$, and the terminal set scaling factor $\alpha$. The penalty function $\ell_\xi(\cdot)$ introduces large cost values in the case of (tightened) state constraint violations through the slack variables $\xi_{i \vert k} \in \mathbb{R}^{m_x}$ in~\eqref{eq:SCMPC_Cost} along the horizon $i=0,..,N-1$ . Typically, $\ell_\xi$ is a class $\mathcal K$-function including a norm, which we select as the 1-norm, i.e., $\ell_\xi(\xi) = \rho \Vert \xi \Vert_1$, where the scaling $\rho >0$ is a design parameter. Ideally, $\ell_\xi$ represents an exact penalty function~\citep{kerrigan2000soft} to recover~\eqref{eq:MPC_problem} if $x(k)\in\mathcal X_N^\eta$. The terminal set constraint~\eqref{eq:SCMPCconstr_Terminal} allows for scaling the terminal set $\alpha\mathcal X_{\mathrm f}$, which can yield additional terminal slack penalties through $\xi_{N|k}$ in~\eqref{eq:SCMPCconstr_TerminalCoupling} since $1\cdot \mathcal X_{\mathrm f}$ is not required to be a subset of the state constraints.
\begin{assumption}\label{ass:TerminalSetSCMPC}
    The terminal set $\alpha \mathcal X_{\mathrm f}$ with $\bar{\mathcal X}_f\subset \mathcal X_{\mathrm f}$ satisfies all conditions in Assumption~\ref{ass:TerminalSetMPC} for all $\alpha\in [0,1]$ except for $\mathcal X_{\mathrm f}\subseteq \mathcal X$, i.e., neglecting state constraints.
\end{assumption}
Since the terminal set does not need to satisfy the state constraints, i.e., $\mathcal X_{\mathrm f} \not\subset \mathcal X$, we can enlarge the terminal set, leading to more degrees of freedom compared to the hard constrained case. Different from the linear case~\citep{zeilinger2014SCMPC}, invariance of $\alpha \mathcal X_{\mathrm f}$ is required explicitly for all $\alpha \in [0, 1]$.
Hence, the soft constrained MPC~\eqref{eq:SCMPC_Problem_full} contains an additional assumption to guarantee asymptotic stability of the closed-loop compared with the nominal MPC problem~\eqref{eq:MPC_problem}.
The constraint~\eqref{eq:SCMPCconstr_TerminalCoupling} ensures that the terminal set lies in a softened state constraint set defined by the slack variable $\xi_{N|k}$. This terminal slack is included at every prediction step to ensure that the scaled terminal set is a subset of the softened state constraints. Adding $\xi_{N|k}$ to every prediction step penalty function, the terminal set is selected possibly small through the optimizer, which is crucial in establishing asymptotic stability.
The resulting control law of the soft constrained \ac{MPC} is given by 
\begin{equation}\label{eq:SCMPC_policy}
   \pi_\mathrm{MPC}^\mathrm{s}(x(k)) = u_{0\vert k}^*(x(k)),\text{ defined on } \mathcal{X}^\eta_{N, \mathrm{s}} \triangleq\{x(k) \in\mathbb{R}^{n_x}\vert \exists u_{\cdot \vert k} \text{ s.t. }~\eqref{eq:SCMPCconstr_scaling} -~\eqref{eq:SCMPCconstr_Slack}, \eta > 0\},
\end{equation}
with the following properties resulting from applying \cite{Zeilinger2010SCMPC} to nonlinear systems, see \extAppendix{\refAppProofOfStuff} for a sketch of the proof.
\begin{prop}\label{prop:SCMPCStability}
    Let Assumption\,\ref{ass:TerminalSetSCMPC} hold and consider \eqref{eq:SCMPC_Problem_full} with $\ell (x,u) = \Vert x \Vert_Q^2 + \Vert u \Vert_R^2$. If~\eqref{eq:SCMPC_Problem_full} is initially feasible, then it follows under application of $u(k) = \pi_\mathrm{MPC}^\mathrm{s}(x(k))$ that~\eqref{eq:SCMPC_Problem_full} is recursively feasible and the origin is an asymptotically stable equilibrium point for the closed-loop system~\eqref{eq:system_model}.
\end{prop}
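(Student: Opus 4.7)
}

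The plan is the standard Lyapunov argument for soft-constrained MPC, adapted to the nonlinear setting. I would first establish recursive feasibility through an explicit candidate construction, and then use the resulting cost bound together with positive definiteness of $\ell$ to invoke the standard Lyapunov theorem.

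\textbf{Step 1 (Candidate construction).} Let $\{u_{i|k}^*\}$, $\{x_{i|k}^*\}$, $\{\xi_{i|k}^*\}$, $\alpha^*$ denote an optimal solution of~\eqref{eq:SCMPC_Problem_full} at time $k$. For time $k+1$ I would use the standard shift: take $u_{i|k+1} = u_{i+1|k}^*$ for $i=0,\dots,N-2$ and close the loop at the tail with $u_{N-1|k+1} = K(x_{N|k}^*)$ using the terminal controller from Assumption~\ref{ass:TerminalSetSCMPC}. The predicted states are propagated by~\eqref{eq:MPC_dyn}; in particular $x_{N-1|k+1}=x_{N|k}^*\in \alpha^*\mathcal X_\mathrm{f}$ and, by invariance of $\alpha^*\mathcal X_\mathrm{f}$ under $K$, also $x_{N|k+1}\in \alpha^*\mathcal X_\mathrm{f}$. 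For the slacks I would shift $\xi_{i|k+1}=\xi_{i+1|k}^*$ for $i=0,\dots,N-2$, keep the terminal scaling and terminal slack $\alpha_{k+1}=\alpha^*$, $\xi_{N|k+1}=\xi_{N|k}^*$, and set $\xi_{N-1|k+1}=0$. Feasibility of the tightened state constraint~\eqref{eq:SCMPCconstr_state} at the new index $N-1$ is precisely what the coupling constraint~\eqref{eq:SCMPCconstr_TerminalCoupling} at time $k$ gives us, because $x_{N-1|k+1}\in\alpha^*\mathcal X_\mathrm{f}$ satisfies $H_x x \le \mathds 1(1-\eta)+\xi_{N|k}^*$.

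\textbf{Step 2 (Cost descent).} Inserting the candidate into~\eqref{eq:SCMPC_Cost} and using Assumption~\ref{ass:TerminalSetSCMPC} to dominate the terminal cost increment, $V_\mathrm f(x_{N|k+1}) - V_\mathrm f(x_{N|k}^*) \le -\ell(x_{N|k}^*,K(x_{N|k}^*))$, telescopes the nominal part of the objective to $J(x(k),u_{\cdot|k}^*) - \ell(x(k),u_{0|k}^*)$. For the slack part, the shift and the choice $\xi_{N-1|k+1}=0$ yield, after cancellation,
\begin{equation*}
\ell_\xi(\xi_{N|k+1}) + \sum_{i=0}^{N-1}\ell_\xi(\xi_{i|k+1}+\xi_{N|k+1})
= \ell_\xi(\xi_{N|k}^*) + \sum_{i=0}^{N-1}\ell_\xi(\xi_{i|k}^*+\xi_{N|k}^*) + \ell_\xi(\xi_{N|k}^*) - \ell_\xi(\xi_{0|k}^*+\xi_{N|k}^*),
\end{equation*}
and since $\ell_\xi(\xi)=\rho\|\xi\|_1$ is monotone on the nonnegative orthant, the last two terms combined are nonpositive. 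Combining the two inequalities gives
\begin{equation*}
V^{*,\mathrm s}(x(k+1)) \le V^{*,\mathrm s}(x(k)) - \ell(x(k),u_{0|k}^*).
\end{equation*}

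\textbf{Step 3 (Stability).} To conclude asymptotic stability I would verify the standard Lyapunov bounds around the origin: the lower bound $V^{*,\mathrm s}(x)\ge \ell(x,u_{0|k}^*)\ge \alpha_1(\|x\|)$ with $\alpha_1\in\mathcal K_\infty$ follows from $R\succ 0$ together with a standard detectability/observability argument on $\ell$ (or directly if $Q\succ 0$), and an upper bound $V^{*,\mathrm s}(x)\le \alpha_2(\|x\|)$ on a neighborhood of the origin follows because at the origin the trivial choice $u=0$, $\xi=0$, $\alpha=0$ is feasible with $V^{*,\mathrm s}(0)=0$, and local Lipschitz/continuity arguments give an upper $\mathcal K_\infty$-bound. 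Together with Step~2 these are the hypotheses of the standard discrete-time Lyapunov theorem, yielding asymptotic stability of the origin for the closed loop.

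\textbf{Main obstacle.} The routine pieces are the nominal shift and the terminal decrease; the only genuinely new bookkeeping is handling the terminal slack $\xi_{N|k}$ that appears inside \emph{every} stage constraint~\eqref{eq:SCMPCconstr_state} and the terminal coupling~\eqref{eq:SCMPCconstr_TerminalCoupling}. This is what forces the strengthened Assumption~\ref{ass:TerminalSetSCMPC} (invariance of $\alpha\mathcal X_\mathrm f$ for \emph{all} $\alpha\in[0,1]$ rather than only $\alpha=1$) and what makes the choice $\xi_{N-1|k+1}=0$ work; getting this bookkeeping right is the step I would expect to need the most care when turning the sketch into the full appendix proof.
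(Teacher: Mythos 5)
Your proposal is correct and follows essentially the same route as the paper: the appendix proof is a sketch that invokes the shifted candidate from \cite{Zeilinger2010SCMPC} with the nonlinear terminal controller, and your explicit construction (shifted inputs/slacks, $\xi_{N-1|k+1}=0$, $\alpha_{k+1}=\alpha^*$, terminal coupling reused for the new stage-$N-1$ constraint) reproduces exactly the descent inequality \refEqAppScmpcDescent{} with the nonpositive slack residual $\ell_\xi(\xi_{N|k}^*)-\ell_\xi(\xi_{0|k}^*+\xi_{N|k}^*)$. Your Step 3 is, if anything, more careful than the paper about the lower Lyapunov bound when $Q\succeq 0$ only.
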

The framework of soft constrained MPC not only reflects states that are close to constraint violations in the value function when selecting $\eta >0$ but also ensures that the corresponding MPC value function~\eqref{eq:SCMPC_Problem_full} is Lipschitz continuous on bounded sets.
\begin{theorem}\label{thm:LipschitzVFcn}
Consider the soft constrained MPC problem~\eqref{eq:SCMPC_Problem_full} and suppose the MPC cost $J(x, u)$ is Lipschitz continuous on bounded sets. Let the same hold for the penalty function $l_\xi(\xi)$ and the system dynamics $f(x,u)$. Then the optimal value function is Lipschitz continuous for all $x \in \mathcal X^\eta_{N,s}$ for which the constraint $\alpha \leq 1$ in~\eqref{eq:SCMPCconstr_scaling} is not active, i.e., it does not affect the optimal solution.
\end{theorem}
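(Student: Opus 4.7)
The plan is to prove local Lipschitz continuity at any $x_1 \in \mathcal X^\eta_{N,\mathrm s}$ at which the constraint $\alpha \leq 1$ is inactive, by the standard perturbation technique: take an optimizer at $x_1$, construct a feasible (not necessarily optimal) solution for any nearby $x_2$, and bound the cost increase linearly in $\lVert x_1 - x_2\rVert$. Swapping the roles of $x_1$ and $x_2$ then yields the Lipschitz bound on a neighborhood. Since the bounded-set Lipschitz constants of $f$, $J$, and $\ell_\xi$ will be used uniformly, the construction must stay in a bounded tube around the optimal prediction.

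Let $(u^*_{\cdot|k}, \xi^*_{\cdot|k}, \alpha^*)$ solve~\eqref{eq:SCMPC_Problem_full} at $x_1$. I reuse the input sequence $u^*_{\cdot|k}$ and roll out~\eqref{eq:MPC_dyn} from $x_2$ to obtain a perturbed state trajectory $\tilde x_{\cdot|k}$. A straightforward induction using the Lipschitz property of $f$ on bounded sets gives $\lVert \tilde x_{i|k} - x^*_{i|k}\rVert \leq L_f^{\,i}\,\lVert x_1 - x_2\rVert$ for $i = 0, \dots, N$. Constraints~\eqref{eq:MPC_IC}--\eqref{eq:MPC_input} are then automatically satisfied.

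The delicate part is restoring feasibility of the terminal coupling. Let $\lVert \cdot \rVert_{\mathcal X_{\mathrm f}}$ denote the Minkowski functional of $\mathcal X_{\mathrm f}$, which is well-defined and Lipschitz on bounded sets since $0\in\mathrm{int}\,\mathcal X_{\mathrm f}$. Set $\tilde\alpha \triangleq \max\bigl(\alpha^*,\,\lVert \tilde x_{N|k}\rVert_{\mathcal X_{\mathrm f}}\bigr)$, which secures~\eqref{eq:SCMPCconstr_Terminal} and satisfies $0 \leq \tilde\alpha - \alpha^* \leq L_{\mathrm M}\,\lVert x_1 - x_2\rVert$. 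The inactivity of $\alpha \leq 1$ gives $\alpha^*<1$, so $\tilde\alpha \leq 1$ holds on a sufficiently small neighborhood of $x_1$, which is precisely where the hypothesis enters. Next, define $\tilde\xi_{N|k}$ componentwise by the minimum nonnegative value making~\eqref{eq:SCMPCconstr_TerminalCoupling} hold with $\tilde\alpha$; linearity of $H_x$ together with the Lipschitz bound on $\tilde\alpha - \alpha^*$ yields $\lVert \tilde\xi_{N|k} - \xi^*_{N|k}\rVert = O(\lVert x_1 - x_2\rVert)$. Finally, set each $\tilde\xi_{i|k}$ componentwise to the smallest nonnegative value restoring~\eqref{eq:SCMPCconstr_state} at $\tilde x_{i|k}$; using the closed-form update $\tilde\xi_{i|k} = \bigl(\xi^*_{i|k} + H_x(\tilde x_{i|k} - x^*_{i|k}) + (\xi^*_{N|k}-\tilde\xi_{N|k})\bigr)^+$ together with the trajectory bound gives $\lVert\tilde\xi_{i|k} - \xi^*_{i|k}\rVert = O(\lVert x_1 - x_2\rVert)$.

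The Lipschitz bound now follows by estimating the objective~\eqref{eq:SCMPC_Cost}: the difference $J(x_2,u^*_{\cdot|k}) - J(x_1,u^*_{\cdot|k})$ is Lipschitz in $\lVert x_1 - x_2 \rVert$ by hypothesis on $J$ and the trajectory bound, and each slack penalty term changes by at most $L_{\ell_\xi}\,\lVert\tilde\xi_{i|k}+\tilde\xi_{N|k}-\xi^*_{i|k}-\xi^*_{N|k}\rVert = O(\lVert x_1 - x_2\rVert)$ by Lipschitz continuity of $\ell_\xi$. Summing the $O(N)$ contributions and symmetrising yields the claim. The main obstacle, and the reason the hypothesis is essential, is the coupling between $\alpha$ and $\xi_{N|k}$ in~\eqref{eq:SCMPCconstr_TerminalCoupling}: if $\alpha^* = 1$ were admissible at $x_1$, the perturbation $\tilde\alpha$ might leave $[0,1]$, in which case preserving feasibility would require altering the input sequence $u^*_{\cdot|k}$, and the simple perturbation argument above would no longer deliver a Lipschitz bound.
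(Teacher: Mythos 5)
Your argument is correct and rests on the same two observations that drive the paper's proof: the minimal slacks restoring feasibility are Lipschitz functions of the perturbed prediction, and inactivity of $\alpha \le 1$ is exactly what lets the perturbation be absorbed into $\alpha$ without touching the input sequence. The packaging differs, though. The paper first converts the set inclusion~\eqref{eq:SCMPCconstr_TerminalCoupling} into explicit scalar inequalities via a support-function/duality computation (Lemma~\ref{lem:app_alternativeCoupling}) and then \emph{eliminates} all state-dependent constraints into the objective---substituting precisely your minimal slacks---so that the problem becomes a minimization of a locally Lipschitz function over a compact decision set independent of $x$, at which point Theorem~C.29 of \cite{rawlings2017model} finishes the job. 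You instead run the candidate-solution perturbation by hand; this is self-contained (you are essentially re-proving that cited result inline) at the cost of more bookkeeping. One step you should tighten: since \eqref{eq:SCMPCconstr_TerminalCoupling} is a set inclusion, ``the minimum nonnegative $\tilde\xi_{N|k}$ making it hold'' is only well-defined after passing to the support function of $\tilde\alpha\mathcal X_{\mathrm f}$ in the directions $[H_x]_j$---exactly the content of Lemma~\ref{lem:app_alternativeCoupling}---so ``linearity of $H_x$'' is not the right justification. Moreover, with the paper's scaling convention $\alpha\mathcal X_{\mathrm f}=\{x \mid x^\top P x\le \alpha h_{\mathrm f}\}$ the support function is $\sqrt{\alpha h_{\mathrm f} c_j}$, so the Lipschitz estimate must be carried by $\sqrt{\tilde\alpha}-\sqrt{\alpha^*}$ rather than by $\tilde\alpha-\alpha^*$; the bound still closes, because $\sqrt{\tilde\alpha}$ coincides, up to the maximum with $\sqrt{\alpha^*}$, with $\lVert \tilde x_{N|k}\rVert_P/\sqrt{h_{\mathrm f}}$ and is therefore controlled by the terminal-state deviation, but as written your chain silently assumes the standard Minkowski scaling, under which the dependence on $\alpha$ is linear.
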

The proof can be found in \extAppendix{\refAppProveLipschitz} and is based on eliminating all remaining state constraints into a Lipschitz continuous objective defined on a compact set.
This insight shows that $V^{*,\mathrm{s}}(x)$ in \eqref{eq:SCMPC_Problem_full} is local Lipschitz continuous under common MPC assumptions, and thus, there is a reduced systematic error during supervised learning when using continuous function approximator classes.

\section{Value function-based approximation of the \ac{MPC} control law}\label{sec:approx_mpc_analysis}
This section investigates effects from approximating of the soft constrained value function $V^\mathrm{*,s}(x)$ with $\tilde V^\mathrm{s}(x)$ as outlined in~\eqref{eq:approximate_smpc_value_function} based on the soft constrained \ac{MPC} problem~\eqref{eq:SCMPC_Problem_full}.
The corresponding approximate optimal control policy \citep{bertsekas2022abstract, powell2007approximate} providing the desired approximate MPC controller is given by 
\begin{equation} \label{eq:ApproxControl}
    \tilde u(x(k)) \in \argmin_{u \in \mathcal{U}} \ell(x(k), u) + \tilde V^\mathrm{s}(f(x(k), u)).
\end{equation}
While the controller \eqref{eq:ApproxControl} still requires solving an optimization problem online, the resulting problem is substantially easier to solve efficiently compared with the MPC problem~\eqref{eq:SCMPC_Problem_full} for common parametrizations of the value function, such as artificial neural networks. This increased efficiency can be achieved through gridding and parallel evaluation of the input space. Furthermore, in cases of, e.g., vehicles or airplanes, the input is often low-dimensional compared to a large number of system states, which paths the way for an efficient implementation on embedded hardware.
\par
The upcoming results quantify the effect of approximation errors between $\tilde V^\mathrm{s}(x)$ and the true MPC value function, i.e., $\varepsilon(x) \triangleq \tilde V^\mathrm{s}(x) - V^{*, \mathrm{s}}(x)$, which we assume to be bounded as follows.
\begin{assumption}\label{ass:ErrorBound}
    Consider an approximate soft constrained MPC value function $\tilde V^\mathrm{s}(x): \mathcal{X}^+_{\mathcal{U}, \mathrm{s}} \cup \mathcal{X}_{N, \mathrm{s}}^\eta \rightarrow \mathbb R$ of $V^{*, \mathrm{s}}$, as defined in~\eqref{eq:SCMPC_Problem_full}, 
    with $\mathcal{X}^+_{\mathcal{U}, \mathrm{s}}\triangleq \{x^+ = f(x, u) \vert x  \in \mathcal{X}^\eta_{N, \mathrm{s}},  u \in \mathcal{U}\}$. For all $x \in \mathcal{X}^\eta_{N, \mathrm{s}}$ there exists an $\hat\varepsilon>0$ satisfying
    \begin{equation} \label{eq:errorBound}
        \vert \varepsilon(x) \vert \triangleq \vert \tilde V^\mathrm{s} (x) - V^{*, \mathrm{s}}(x) \vert \le \hat \varepsilon .
    \end{equation}
\end{assumption}
While assuming similar upper bounds is common practice in literature on approximate MPC control laws, see, e.g., \cite{hertneck2018learning}, we present an efficient supervised learning parametrization to obtain possibly small values $\varepsilon(x)$ in Section~\ref{sec:mpc_value_function_approximation}. The corresponding bound $\hat \varepsilon$ can typically be estimated for common function approximator classes such as artificial neural networks or kernel-based learning techniques, see, e.g., \cite{gal2016dropout,maddalena2021deterministic}.

\subsection{Stability}\label{subsec:stability}
To establish \ac{ISS} of the closed-loop system under~\eqref{eq:ApproxControl}, w.r.t.~the approximation error defined in Assumption~\ref{ass:ErrorBound}, we assume invariance of the feasible set of the soft constrained MPC under~\eqref{eq:ApproxControl}.
\begin{assumption}\label{ass:invariance}
    For all $x\in \mathcal{X}^\eta_{N, \mathrm{s}}$ it holds that $f(x, \tilde u(x)) \in \mathcal{X}^\eta_{N, \mathrm{s}}$.
\end{assumption}
While Assumption~\ref{ass:invariance} is a central part of the following investigations, it becomes a technical assumption for Theorem~\ref{th:constrSat} if $\{x^+ = f(x, u) \vert x \in \mathcal{X}, u \in \mathcal{U}\} \subset \mathcal{X}_{N, \mathrm{s}}^\eta$ holds, see paragraph below the proof of Theorem~\ref{th:constrSat}.

%
As a first step toward ISS, we relate the approximate value function $\tilde V^{\mathrm{s}}(x)$ under application of~\eqref{eq:ApproxControl} to the original value function $V^{*,\mathrm{s}}(x)$ under the original control law~\eqref{eq:SCMPC_policy} using~\eqref{eq:errorBound} as follows.
\begin{lemma}\label{lemma:Error_diffU}
If the conditions in Proposition~\ref{prop:SCMPCStability} and Assumptions\, \ref{ass:ErrorBound}, \ref{ass:invariance} hold for some $\eta\in [0, 1) $, then
\begin{equation}
    l(x, \tilde u) + \tilde V^\mathrm{s}(f(x,\tilde u)) \le l(x,\pi_\mathrm{MPC}^\mathrm{s}) + V^{*, \mathrm{s}}(f(x,\pi_\mathrm{MPC}^\mathrm{s})) + \vert \varepsilon(f(x,\pi_\mathrm{MPC}^\mathrm{s}) )\vert \quad \forall x \in \mathcal{X}^\eta_{N, \mathrm{s}}.
\end{equation}
\end{lemma}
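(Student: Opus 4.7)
The plan is a short two-step argument combining the defining optimality of the approximate control law with the pointwise error bound from Assumption~\ref{ass:ErrorBound}. First, I observe that by \eqref{eq:ApproxControl}, $\tilde u(x)$ minimizes $\ell(x,u) + \tilde V^{\mathrm{s}}(f(x,u))$ over all $u \in \mathcal{U}$. Since the MPC policy $\pi_\mathrm{MPC}^\mathrm{s}(x)$ is itself an admissible input, i.e., $\pi_\mathrm{MPC}^\mathrm{s}(x) \in \mathcal{U}$, substituting it as a competitor in that minimization yields
\begin{equation*}
\ell(x,\tilde u) + \tilde V^{\mathrm{s}}(f(x,\tilde u)) \;\le\; \ell(x,\pi_\mathrm{MPC}^\mathrm{s}) + \tilde V^{\mathrm{s}}(f(x,\pi_\mathrm{MPC}^\mathrm{s})).
\end{equation*}
Then I would rewrite $\tilde V^{\mathrm{s}}(f(x,\pi_\mathrm{MPC}^\mathrm{s}))$ as $V^{*,\mathrm{s}}(f(x,\pi_\mathrm{MPC}^\mathrm{s})) + \varepsilon(f(x,\pi_\mathrm{MPC}^\mathrm{s}))$ and bound the signed error by its absolute value, using \eqref{eq:errorBound}. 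Chaining these two inequalities gives exactly the claim.

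The only real care needed is to verify that every quantity on both sides is in its respective domain of definition. For the approximate value function, its domain $\mathcal{X}^+_{\mathcal{U},\mathrm{s}} \cup \mathcal{X}^\eta_{N,\mathrm{s}}$ from Assumption~\ref{ass:ErrorBound} covers both $f(x,\tilde u)$ and $f(x,\pi_\mathrm{MPC}^\mathrm{s})$ whenever $x \in \mathcal{X}^\eta_{N,\mathrm{s}}$ and $\tilde u, \pi_\mathrm{MPC}^\mathrm{s} \in \mathcal{U}$, which is the case here. For the true value function $V^{*,\mathrm{s}}$ to be well-defined at $f(x,\pi_\mathrm{MPC}^\mathrm{s})$ (and, in particular, for the error bound to apply there), I would invoke the recursive feasibility of \eqref{eq:SCMPC_Problem_full} under $\pi_\mathrm{MPC}^\mathrm{s}$, established in Proposition~\ref{prop:SCMPCStability}, which guarantees $f(x,\pi_\mathrm{MPC}^\mathrm{s}(x)) \in \mathcal{X}^\eta_{N,\mathrm{s}}$; Assumption~\ref{ass:invariance} is only needed later (for the iterated trajectory under $\tilde u$) and plays no role here.

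There is no substantial obstacle: the lemma is essentially a bookkeeping step that translates optimality of $\tilde u$ with respect to the surrogate cost into a comparison against the true Bellman-style cost of the MPC policy, absorbing the value-function mismatch into a single $|\varepsilon|$ term at the successor state. The main point to articulate cleanly is that the bound involves the error evaluated at $f(x,\pi_\mathrm{MPC}^\mathrm{s})$ rather than at $f(x,\tilde u)$, because the optimality step eliminates $\tilde V^{\mathrm{s}}(f(x,\tilde u))$ on the left without needing any comparison to $V^{*,\mathrm{s}}$ there.
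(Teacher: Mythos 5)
Your argument is correct and reaches the claim by the same core mechanism as the paper (optimality of $\tilde u$ in \eqref{eq:ApproxControl}, the MPC policy as a feasible competitor, and absorption of the value-function mismatch at $f(x,\pi_\mathrm{MPC}^\mathrm{s})$ into $\vert\varepsilon\vert$), but the execution differs in one substantive detail. The paper's proof first rewrites the unrestricted minimization over $u\in\mathcal U$ as a minimization over the restricted set $\{u\in\mathcal U : f(x,u)\in\mathcal X^\eta_{N,\mathrm{s}}\}$, invoking Assumption~\ref{ass:invariance} to justify equality between \eqref{eq:app_InterLemmaHelper1} and \refEqAppInterLemmaHelper, and only then bounds $\tilde V^\mathrm{s}$ by $V^{*,\mathrm{s}}+\vert\varepsilon\vert$ inside that restricted problem before substituting the candidate. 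You instead substitute $\pi_\mathrm{MPC}^\mathrm{s}(x)\in\mathcal U$ directly into the unrestricted problem and only then pass to $V^{*,\mathrm{s}}+\vert\varepsilon\vert$ at the single point $f(x,\pi_\mathrm{MPC}^\mathrm{s})$, where recursive feasibility (Proposition~\ref{prop:SCMPCStability}) guarantees both quantities are defined. This is cleaner, and your side observation is correct and worth emphasizing: Assumption~\ref{ass:invariance} is indeed not needed for this lemma, since $\tilde V^\mathrm{s}(f(x,\tilde u))$ is well-defined on $\mathcal X^+_{\mathcal U,\mathrm{s}}$ by Assumption~\ref{ass:ErrorBound} and is never compared to $V^{*,\mathrm{s}}$ at that point. (In fact, even in the paper's route only the inequality \eqref{eq:app_InterLemmaHelper1} $\le$ \refEqAppInterLemmaHelper{} is needed, which holds trivially because the feasible set shrinks, so the invariance assumption is dispensable there as well; the paper itself hints at this in the remark following Theorem~\ref{th:constrSat}.) Your version buys a marginally weaker hypothesis for the lemma; the paper's version keeps the statement aligned with the assumptions it needs anyway for Theorems~\ref{th:ISS_ApproxControl} and~\ref{th:constrSat}.
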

A detailed proof can be found in \extAppendix{\refSubecAppProofLemma}. Using Lemma\,\ref{lemma:Error_diffU} we can establish the following system theoretic result, that formalizes how the approximation error influences asymptotic stability of the closed-loop under the approximating control~\eqref{eq:ApproxControl}. More precisely, Theorem\,\ref{th:ISS_ApproxControl} is equivalent to the \ac{ISS} property, which is commonly used in control (see, e.g., \cite{jiang2001input} or \cite{rawlings2017model}) to characterize robust asymptotic stability. 
\begin{theorem}\label{th:ISS_ApproxControl}
If the conditions of Lemma~\ref{lemma:Error_diffU} hold, then it follows for all $x(0) \in \mathcal{X}^\eta_{N, \mathrm{s}}$ that application of the law~\eqref{eq:ApproxControl} implies input-to-state stability, that is,
\begin{equation}
    \vert x(k) \vert \le \beta (\vert x(0)\vert, k) + \sigma\left( \sup_{i \in [0,k-1]} \vert \bar \varepsilon\big(x(i)\big) \vert \right) \quad\forall k \ge 0,
\end{equation}
where $\beta(\cdot)$ is a $\mathcal{KL}$-function, $\sigma(\cdot)$ is a $\mathcal{K}$-function and 
\begin{equation} \label{eq:ISS_error}
   \bar \varepsilon\big(x(k)\big) \triangleq \Big\vert \varepsilon\Big(f\big(x(k), \tilde u\big(x(k)\big)\big)\Big) \Big\vert + \Big\vert  \varepsilon\Big(f\big(x(k), \pi_\mathrm{MPC}^\mathrm{s}\big(x(k)\big)\big)\Big) \Big\vert.
\end{equation}
\end{theorem}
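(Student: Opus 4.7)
The plan is to show that $V^{*,\mathrm{s}}$ itself serves as an ISS-Lyapunov function for the closed-loop system under the approximate control law $\tilde u$, and then invoke a standard ISS-Lyapunov converse argument (e.g.\ as in \cite{jiang2001input} or \cite{rawlings2017model}) to obtain the desired $\mathcal{KL}/\mathcal{K}$-bound on $|x(k)|$.

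First I would collect the Lyapunov-type bounds on $V^{*,\mathrm{s}}$ that already follow from Proposition~\ref{prop:SCMPCStability} and its proof: namely, class-$\mathcal{K}_\infty$ functions $\alpha_1,\alpha_2$ with $\alpha_1(|x|)\le V^{*,\mathrm{s}}(x)\le \alpha_2(|x|)$ on $\mathcal{X}^\eta_{N,\mathrm{s}}$ (lower bound from positive definiteness of $\ell$, upper bound from a suitable feasible candidate near the origin using the terminal ingredients of Assumption~\ref{ass:TerminalSetSCMPC}), together with the nominal descent inequality $V^{*,\mathrm{s}}(f(x,\pi^{\mathrm{s}}_{\mathrm{MPC}}(x))) - V^{*,\mathrm{s}}(x) \le -\ell(x,\pi^{\mathrm{s}}_{\mathrm{MPC}}(x))$.

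Next I would derive the perturbed descent along closed-loop trajectories driven by $\tilde u$. Writing $V^{*,\mathrm{s}} = \tilde V^{\mathrm{s}} - \varepsilon$, rearranging Lemma~\ref{lemma:Error_diffU}, and combining with the nominal descent yields
\begin{align*}
V^{*,\mathrm{s}}(f(x,\tilde u)) &= \tilde V^{\mathrm{s}}(f(x,\tilde u)) - \varepsilon(f(x,\tilde u))\\
&\le \ell(x,\pi^{\mathrm{s}}_{\mathrm{MPC}}) + V^{*,\mathrm{s}}(f(x,\pi^{\mathrm{s}}_{\mathrm{MPC}})) + |\varepsilon(f(x,\pi^{\mathrm{s}}_{\mathrm{MPC}}))| + |\varepsilon(f(x,\tilde u))| - \ell(x,\tilde u)\\
&\le V^{*,\mathrm{s}}(x) - \ell(x,\tilde u) + \bar\varepsilon(x),
\end{align*}
with $\bar\varepsilon(x)$ exactly as in~\eqref{eq:ISS_error}. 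Since $\ell(x,u)\ge \lambda_{\min}(Q)|x|^2 \triangleq \alpha_3(|x|)$, this is the standard ISS-Lyapunov inequality $V^{*,\mathrm{s}}(x^+) - V^{*,\mathrm{s}}(x) \le -\alpha_3(|x|) + \bar\varepsilon(x)$. Invariance of $\mathcal{X}^\eta_{N,\mathrm{s}}$ under $\tilde u$ (Assumption~\ref{ass:invariance}) ensures the iteration remains well-defined on the domain where these bounds hold.

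Finally, applying a standard discrete-time ISS-Lyapunov theorem (as in \cite{jiang2001input}) to the inequality above produces $\mathcal{KL}$- and $\mathcal{K}$-functions $\beta$ and $\sigma$, constructed from $\alpha_1,\alpha_2,\alpha_3$, such that the stated bound on $|x(k)|$ in terms of $|x(0)|$ and $\sup_{i\in[0,k-1]}|\bar\varepsilon(x(i))|$ holds for every $x(0)\in\mathcal{X}^\eta_{N,\mathrm{s}}$. The main obstacle I anticipate is justifying the quadratic-type upper bound $V^{*,\mathrm{s}}(x)\le \alpha_2(|x|)$ near the origin: because the soft-constrained problem optimizes over $\alpha\in[0,1]$ and slack variables, one must argue that the slack terms and scaled terminal cost contribute only $\mathcal{K}_\infty$-bounded amounts near the origin. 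This follows, however, from Assumption~\ref{ass:TerminalSetSCMPC} (invariance of $\alpha\mathcal X_{\mathrm f}$ and existence of a stabilizing $K(x)$) together with continuity of $V_{\mathrm f}$ and $\ell$, so no new technical ingredient beyond what was already used for Proposition~\ref{prop:SCMPCStability} is needed.
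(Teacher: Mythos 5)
Your proposal is correct and follows essentially the same route as the paper: both establish that $V^{*,\mathrm{s}}$ is an ISS-Lyapunov function by combining the error bound of Assumption~\ref{ass:ErrorBound}, Lemma~\ref{lemma:Error_diffU}, and the nominal descent of the soft constrained value function to obtain $V^{*,\mathrm{s}}(f(x,\tilde u)) - V^{*,\mathrm{s}}(x) \le -\alpha(\Vert x\Vert) + \sigma(\vert\bar\varepsilon(x)\vert)$, then invoke the standard discrete-time ISS-Lyapunov result together with the $\mathcal{K}_\infty$ upper and lower bounds (which the paper likewise defers to the arguments of \cite{Zeilinger2010SCMPC}). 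The only cosmetic difference is that you write the chain starting from $V^{*,\mathrm{s}} = \tilde V^{\mathrm{s}} - \varepsilon$ rather than bounding $V^{*,\mathrm{s}}(f(x,\tilde u)) \le \tilde V^{\mathrm{s}}(f(x,\tilde u)) + \vert\varepsilon(f(x,\tilde u))\vert$ directly, which is equivalent.
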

\begin{proof}
By Assumption~\ref{ass:invariance}, the set $\mathcal{X}^\eta_{N, \mathrm{s}}$ is invariant under the control law\,\eqref{eq:ApproxControl}. Further, by Assumption\,\ref{ass:ErrorBound}, we have $\bar\varepsilon \in [0, 2\hat\varepsilon]$, i.e., $ \bar \varepsilon$ lies in a compact set containing the origin. 
Consider now the difference of $V^{*, \mathrm{s}}(f(x, \tilde u))$ and $V^{*, \mathrm{s}}(x)$ for any $x\in \mathcal{X}^\eta_{N, \mathrm{s}}$. First we use~\eqref{eq:errorBound}  and then Lemma\,\ref{lemma:Error_diffU} as well as~\eqref{eq:ISS_error} to obtain\begin{align*}
    V^{*, \mathrm{s}}(f(x, \tilde u)) &- V^{*, \mathrm{s}}(x) \le \tilde V^\mathrm{s}(f(x, \tilde u)) + \vert \varepsilon (f(x, \tilde u))\vert - V^{*, \mathrm{s}}(x )\\ 
    &\le \vert \bar \varepsilon (x) \vert - l(x, \tilde u) + l(x,\pi_\mathrm{MPC}^\mathrm{s}) + V^{*, \mathrm{s}}(f(x,\pi_\mathrm{MPC}^\mathrm{s})) - V^{*, \mathrm{s}}(x ) \\
    & \le - \Vert x \Vert_Q^2 + \vert \bar \varepsilon(x) \vert \leq - \alpha(\Vert x \Vert) + \sigma(\vert \bar \varepsilon\vert ),  
\end{align*}
where $\alpha(\cdot)$ is a $\mathcal{K}_\infty$-function and $\sigma(\cdot)$ is a $\mathcal{K}$-function.  
Note that the last inequality holds by the choice of stage cost function $\ell (x,u) = \Vert x \Vert_Q^2 + \Vert u \Vert_R^2$ in Proposition~\ref{prop:SCMPCStability} and the soft constrained MPC value function decrease under $\pi_{\mathrm{MPC}}$, see~\extAppendix{\refEqAppScmpcDescent}. 
By using similar arguments as in \cite{Zeilinger2010SCMPC}, it can be shown that there exist upper and lower bounding $\mathcal{K}_{\infty}$-functions for $V^{*, \mathrm{s}}(x)$.
Thus, by Definition\,B.46 in \cite{rawlings2017model}, $V^{*, \mathrm{s}}(x)$ is an \ac{ISS}-Lyapunov function for the closed-loop system. Since $\bar \varepsilon$ is bounded by Assumption\,\ref{ass:ErrorBound} applying Lemma\,B.47 in \cite{rawlings2017model} shows the desired result. 
\end{proof}
An important consequence of Theorem\,\ref{th:ISS_ApproxControl} is that the proposed control law \eqref{eq:ApproxControl} renders the origin of system \eqref{eq:system_model} asymptotically stable when there is no approximation error. Further, the closed-loop exhibits the converging-input converging-state property \citep{jiang2001input}, i.e., $\bar \varepsilon \big(x(k)\big) \stackrel{k\to \infty}{\longrightarrow} 0$  implies $x(k) \stackrel{k\to \infty}{\longrightarrow} 0 $.
\subsection{Constraint satisfaction}\label{subsec:constraint_satisfaction}
In order to establish constraint satisfaction, we combine a positive tightening factor $\eta>0$ with ISS properties derived in the previous section. To this end, the following assumption relates the magnitude of the stage cost outside of the tightened state constraints to the approximation error~\eqref{eq:errorBound}.
\begin{assumption} \label{ass:stageCost_Error}
    For all $x \in \mathcal{X}\setminus \mathcal{X}_N^\eta$ it holds that $\ell(x, 0) > 2 \hat \varepsilon$. 
\end{assumption}
Assumption~\ref{ass:stageCost_Error} acts as a normalization between the magnitudes of the MPC value function containing the sum of stage costs and the tolerable approximation error of this value function. In practice, Assumption\,\ref{ass:stageCost_Error} can be satisfied by reducing the value function approximation error as described in Section~\ref{sec:mpc_value_function_approximation}.
%
\begin{theorem}\label{th:constrSat}
   Let the assumptions in Proposition~\ref{prop:SCMPCStability} and Assumptions \ref{ass:ErrorBound}, \ref{ass:invariance}, and~\ref{ass:stageCost_Error} hold for some $\eta\in (0, 1)$  and assume that $\eta \rho \ge V_\mathrm{max}+4\hat \varepsilon$ with slack penalty weight $\rho>0$ and $V_\mathrm{max} \ge V^{*,s}(x)$ $\forall x \in \mathcal{X}^\eta_N$. It follows that application of~\eqref{eq:ApproxControl} implies constraint satisfaction according to~\eqref{eq:constraints} for all trajectories starting in $\mathcal{X}_N^\eta$.
\end{theorem}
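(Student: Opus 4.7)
\bigskip

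\noindent\textbf{Proof plan for Theorem~\ref{th:constrSat}.}

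The plan is to separate input and state constraint satisfaction and to exploit two complementary bounds on $V^{*,\mathrm{s}}$: a lower bound that forces the value function to be large whenever $x\notin\mathcal{X}$, and an upper bound propagated from the ISS descent argument of Theorem~\ref{th:ISS_ApproxControl}. Input constraints are immediate, since $\tilde u(x)\in\mathcal U$ by construction of~\eqref{eq:ApproxControl}. The work is to show $x(k)\in\mathcal X$ for all $k\ge 0$ whenever $x(0)\in\mathcal X_N^\eta$, and I would do this by induction on $k$ with the combined invariant ``$x(k)\in\mathcal X$ and $V^{*,\mathrm{s}}(x(k))\le V_{\mathrm{max}}+2\hat\varepsilon$''.

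The first key step is the lower bound. Suppose some $x\in\mathcal X_{N,\mathrm{s}}^\eta$ violates $\mathcal X$, so that $[H_x x]_j>1$ for some row $j$. Since feasibility of~\eqref{eq:SCMPC_Problem_full} at $x$ requires $H_x x\le \mathds{1}(1-\eta)+\xi_{0|k}^*+\xi_{N|k}^*$ with $\xi\ge 0$, the $j$-th component of $\xi_{0|k}^*+\xi_{N|k}^*$ exceeds $\eta$. With $\ell_\xi(\cdot)=\rho\|\cdot\|_1$ and the slack term $\ell_\xi(\xi_{0|k}^*+\xi_{N|k}^*)$ appearing in~\eqref{eq:SCMPC_Cost}, together with nonnegativity of all other cost contributions, this forces $V^{*,\mathrm{s}}(x)>\rho\eta$. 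Combined with the standing hypothesis $\rho\eta\ge V_{\mathrm{max}}+4\hat\varepsilon$, this means any violation of $\mathcal X$ at a reachable state would push $V^{*,\mathrm{s}}$ strictly above $V_{\mathrm{max}}+2\hat\varepsilon$.

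The second key step is to propagate the upper bound $V^{*,\mathrm{s}}(x(k))\le V_{\mathrm{max}}+2\hat\varepsilon$ along the closed loop. For the base case, $x(0)\in\mathcal X_N^\eta$ makes the hard-constrained problem~\eqref{eq:MPC_problem} feasible with zero slack, so $V^{*,\mathrm{s}}(x(0))\le V^{*}(x(0))\le V_{\mathrm{max}}$. For the inductive step I would reuse the inequality $V^{*,\mathrm{s}}(f(x,\tilde u))-V^{*,\mathrm{s}}(x)\le -\|x\|_Q^2+|\bar\varepsilon(x)|$ already established inside the proof of Theorem~\ref{th:ISS_ApproxControl}, together with $|\bar\varepsilon|\le 2\hat\varepsilon$ from Assumption~\ref{ass:ErrorBound}, and split cases: if $x(k)\in\mathcal X_N^\eta$ then $V^{*,\mathrm{s}}(x(k))\le V_{\mathrm{max}}$ and so $V^{*,\mathrm{s}}(x(k+1))\le V_{\mathrm{max}}+2\hat\varepsilon$; if $x(k)\in\mathcal X\setminus\mathcal X_N^\eta$ then Assumption~\ref{ass:stageCost_Error} gives $\|x(k)\|_Q^2=\ell(x(k),0)>2\hat\varepsilon$, so the descent yields $V^{*,\mathrm{s}}(x(k+1))<V^{*,\mathrm{s}}(x(k))\le V_{\mathrm{max}}+2\hat\varepsilon$. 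Assumption~\ref{ass:invariance} ensures $x(k+1)\in\mathcal X_{N,\mathrm{s}}^\eta$ so that $V^{*,\mathrm{s}}(x(k+1))$ is defined and the descent inequality applies.

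Combining the two steps closes the induction: $V^{*,\mathrm{s}}(x(k+1))\le V_{\mathrm{max}}+2\hat\varepsilon<\rho\eta-2\hat\varepsilon<\rho\eta$, which contradicts the lower bound $V^{*,\mathrm{s}}(x(k+1))>\rho\eta$ that would hold if $x(k+1)\notin\mathcal X$; hence $x(k+1)\in\mathcal X$. The main obstacle, I expect, is setting up the case split cleanly so that Assumption~\ref{ass:stageCost_Error} can be invoked exactly when needed, and making sure the descent inequality from the ISS proof is applicable throughout the trajectory via Assumption~\ref{ass:invariance} rather than only on the hard-feasible subset $\mathcal X_N^\eta$; once that bookkeeping is done, the contradiction from $\rho\eta\ge V_{\mathrm{max}}+4\hat\varepsilon$ is immediate.
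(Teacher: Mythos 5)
Your proposal is correct and follows essentially the same route as the paper: a level-set implication showing that a bounded value function forces $x\in\mathcal X$ (via the slack penalty and $\eta\rho\ge V_{\mathrm{max}}+4\hat\varepsilon$), combined with a descent argument propagating the bound along the closed loop, with the same case split on $\mathcal X_N^\eta$ versus $\mathcal X\setminus\mathcal X_N^\eta$ and the same use of Assumptions~\ref{ass:invariance} and~\ref{ass:stageCost_Error}. The only cosmetic difference is that you track $V^{*,\mathrm{s}}$ at level $V_{\mathrm{max}}+2\hat\varepsilon$ whereas the paper tracks $\tilde V^{\mathrm{s}}$ at level $V_{\mathrm{max}}+3\hat\varepsilon$, and your base case should invoke the hypothesis $V_{\mathrm{max}}\ge V^{*,\mathrm{s}}(x)$ on $\mathcal X_N^\eta$ directly rather than the detour through $V^{*}(x(0))$, which need not upper-bound $V^{*,\mathrm{s}}$ because of the terminal-set coupling slack.
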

\begin{proof}
The core idea is to establish an upper bound on the approximated value function to characterize all states satisfying $\mathcal{X}_{N, \mathrm{s}}^\eta \cap \mathcal{X}$. More specifically, we have for all $x \in \mathcal{X}_{N, \mathrm{s}}^\eta$ that
 \begin{equation} \label{eq:Proof_implicationVmax}
     \tilde V^\mathrm{s}(x) \le V_\mathrm{max} + 3 \hat \varepsilon 
     \implies x \in \mathcal{X}.     
 \end{equation}
To see that this is true, note that $V^{*,s}(x) - \hat \varepsilon \leq \tilde V^\mathrm{s}(x)$ holds due to Assumption~\ref{ass:ErrorBound} and that $\rho$ is chosen such that $V_\mathrm{max}+4\hat \varepsilon\leq \eta \rho $ holds. Combining these facts with the condition in~\eqref{eq:Proof_implicationVmax}, we obtain that the condition in~\eqref{eq:Proof_implicationVmax} implies $V^{*,s}(x)\leq \rho \eta$. From here, \eqref{eq:Proof_implicationVmax} follows by contradiction, i.e., let $V^{*,s}(x)\leq \rho\eta$ and $x\notin \mathcal X$. Due to tightened state constraints \eqref{eq:SCMPCconstr_state}, soft constraint penalty \eqref{eq:SCMPC_Cost} and $J\geq 0$, and $\ell_\xi(\xi)=\rho\Vert \xi \Vert_1$ it follows that for any $x\notin \mathcal X$ we have $V^{*,s}(x)>\eta \rho$, showing the desired contradiction.
Next, we investigate the behavior of the approximation of the value functions along trajectories of the closed-loop under~\eqref{eq:ApproxControl}.
Combining Lemma\,\ref{lemma:Error_diffU} and Assumption\,\ref{ass:ErrorBound} yields
\begin{equation}
		\forall x\in\mathcal{X}_{N, \mathrm{s}}^\eta: \tilde V^\mathrm{s}(f(x,  \tilde{u})) - \tilde V^\mathrm{s}(x)
		\le V^{*, \mathrm{s}}(f(x, \pi_\mathrm{MPC})) + 2 \hat \varepsilon -\ell(x, \tilde u) + \ell(x, \pi_\mathrm{MPC}) - V^{*,\mathrm{s}}(x).
	\end{equation}
The soft constrained MPC value function decrease in \extAppendix{\refEqAppScmpcDescent} then implies $\forall x\in\mathcal{X}_{N, \mathrm{s}}^\eta$:
\begin{align}
    \tilde V^\mathrm{s}(f(x,  \tilde{u})) - \tilde V^\mathrm{s}(x) &\le -\ell(x,\tilde u) + \rho \Vert \xi_N^* \Vert_1 - \rho \Vert \xi_0^* + \xi_N^* \Vert_1 + 2 \hat \varepsilon \\ 
    &\le - \ell(x, \tilde u) - \rho \Vert \xi_0^*\Vert_1 + 2 \hat \varepsilon.\label{eq:ProofHelper}
\end{align}
For any $x \in \mathcal{X}_{N}^\eta$ we have by definition that $\xi^*_0 = 0$, and we can use~\eqref{eq:ProofHelper} to obtain $\tilde V^\mathrm{s}(f(x, \tilde{u})) \le \tilde V^\mathrm{s}(x) + 2\hat\varepsilon \le V_\mathrm{max} + 3\hat\varepsilon$. Together with~\eqref{eq:Proof_implicationVmax} and Assumption~\ref{ass:invariance} this implies that $f(x, \tilde u) \in \mathcal{X} \cap \mathcal{X}_{N,\mathrm{s}}^\eta$. If $f(x,\tilde u) \in \mathcal{X}_{N}^\eta$ the previous argument can be applied recursively.
Otherwise, due to invariance of $\mathcal{X}_{N, \mathrm{s}}^\eta$, the only remaining case during closed-loop is $x \in \mathcal{X} \cap \mathcal{X}_{N, \mathrm{s}}^\eta \setminus \mathcal{X}_N^\eta$. In this case, we use Assumption\,\ref{ass:stageCost_Error} and~\eqref{eq:ProofHelper} to obtain $\tilde V^\mathrm{s}(f(x, \tilde{u})) - \tilde V^\mathrm{s}(x)\le  - \rho \Vert \xi_0^* \Vert_1 < 0$, which, again, together with \eqref{eq:Proof_implicationVmax} implies $f(x,\tilde u) \in  \mathcal{X} \cap \mathcal{X}_{N, \mathrm{s}}^\eta$. Hence, we conclude that $x(0) \in \mathcal{X}_N^\eta \implies x(k) \in \mathcal{X}$ $\forall k$.%
\end{proof}%
Note that in case $\{x^+ = f(x, u) \vert x \in \mathcal{X}, u \in \mathcal{U}\} \subset \mathcal{X}_{N, \mathrm{s}}^\eta$ holds, Assumption~\ref{ass:invariance} can be dropped by successively showing that $x\in\mathcal X$, which equally avoids the necessity of Assumption~\ref{ass:invariance} in Lemma~\ref{lemma:Error_diffU}, i.e. in~\extAppendix{\refEqAppInterLemmaHelper}.
While we use a simple $1$-norm penalty, other penalty functions, e.g., quadratic penalty functions, can also be used by adapting Theorem\,\ref{th:constrSat} accordingly. 
The relation of the scaling $\rho$ with respect to the tightening $\eta$ is crucial to guarantee constraint satisfaction. 
A valid choice allows us to construct a level set of $\tilde V^\mathrm{s}(x)$ which lies inside $\mathcal{X}$.
Figure\,\ref{fig:visuProof} illustrates the cases of a valid and a problematic choice of $\rho$.
\begin{figure}[t]
    \centering
    \begin{tikzpicture}[scale=1.0]
            \begin{axis}[
        axis lines=middle,
        xlabel={$x$},
        ylabel={$V$},
        domain=-2.5:2.5,
        samples=100,
        enlarge x limits=true,
        enlarge y limits=true,
        xtick={-2.5, -2, -1.5, 1.5, 2, 2.5},
        ytick= \empty, 
        xticklabels = \empty,
        yticklabels = \empty,
        ticklabel style={font=\small},
        y=1.4cm/3.3,
        x=1.4cm,
        xmin=-2.3,xmax=2.3,
        ymin=-0.3, ymax = 6,
    ]

    \begin{pgfonlayer}{background layer}
        \fill[color=dgreen!30] (-2,0) rectangle (2,5);
        \draw (1.5, 4.2) node {\textcolor{dgreen}{$\mathcal{X}$}};
        \draw[cyan, ultra thick](-\xBound,-0.04) -- (\xBound,-0.04) node [pos=0.1, below] {\textcolor{cyan}{$\mathcal{X}_{N}^\eta$}};
        \draw[gray, ultra thick](-2.5,0.04) -- (2.5,0.04) node [pos=0.05, below] {\textcolor{gray}{$\mathcal{X}_{N, \mathrm{s}}^\eta$}};
    \end{pgfonlayer}
    
    \addplot[blue, thick, name path=value_function] {
        (x <= 0) * (-1.1 *cos(deg(x)*1.9) + 1.1) +
        (x > 0) * (0.4*x^2) +
        (abs(x) > \xBound) * \rhot *(abs(x)-\xBound)
    };

    \addplot[red, thick, dashdotted, name path=value_functionRho] {
        (x <= 0) * (-1.1 *cos(deg(x)*1.9) + 1.1) +
        (x > 0) * (0.4*x^2) +
        (abs(x) > \xBound) * \rhot/10 *(abs(x)-\xBound)
    };
    
    \addplot[blue, dashed, name path=upper] {
        (x <= 0) * (-1.1 *cos(deg(x)*1.9) + 1.1) +
        (x > 0) * (0.4*x^2) +
        (abs(x) > \xBound) * \rhot*(abs(x)-\xBound) + \epst 
    };
    
    \addplot[blue, dashed, name path=lower] {
        (x <= 0) * (-1.1 *cos(deg(x)*1.9) + 1.1) +
        (x > 0) * (0.4*x^2) +
        (abs(x) > \xBound) * \rhot*(abs(x)-\xBound) - \epst
    };


    \addplot[red, dashed, name path=upperRho] {
        (x <= 0) * (-1.1 *cos(deg(x)*1.9) + 1.1) +
        (x > 0) * (0.4*x^2) +
        (abs(x) > \xBound) * \rhot/10*(abs(x)-\xBound) + \epst 
    };
    
    \addplot[red, dashed, name path=lowerRho] {
        (x <= 0) * (-1.1 *cos(deg(x)*1.9) + 1.1) +
        (x > 0) * (0.4*x^2) +
        (abs(x) > \xBound) * \rhot/10*(abs(x)-\xBound) - \epst
    };

    
    \draw[thick, dotted] (-2,2.25) -- (-0.8,2.25) node [pos=1, right] {\small $V_\mathrm{max}$};
    \draw[thick, dotted] (0,2.25) -- (2 , 2.25);
    \draw[thick, dotted] (-2,3) -- (-1.2,3) node [pos=1, right] {\small $V_\mathrm{max} +3 \hat \varepsilon$};
    \draw[thick, dotted] (0,3) -- (2 ,3);
    \end{axis}
    \end{tikzpicture}
    \caption{Visualization of the proof of Theorem~\ref{th:constrSat}: Optimal value function and corresponding approximation error with appropriate choice of $\rho$ (blue) and $\rho$ selected too small (red). Cases with $\rho$ selected to small can cause constraint violation, see left boundary of $\mathcal X$. The level-set $\{x\in\mathbb R^{n_x} | \tilde V^\mathrm{s}(x) \le V_\mathrm{max} + 3 \hat \varepsilon \}$ bounding closed-loop trajectories has to be contained in $\mathcal{X}$ to guarantee constraint satisfaction.}
    \label{fig:visuProof}
\end{figure}
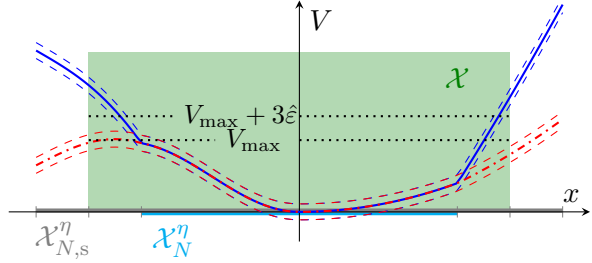
 By using the properties of the soft constrained \ac{MPC} it can then be guaranteed that a level-set w.r.t. $\tilde V^\mathrm{s}(x)$ is invariant, yielding the desired closed-loop constraint satisfaction guarantee. If $\rho$ is not selected appropriately, constraint violations may occur even when $x(0)\in\mathcal{X}_N^\eta$, see \extAppendix{\refExAppConstraintViolation} for an example.
\subsection{Efficient value function approximation by splitting performance and safety}\label{sec:mpc_value_function_approximation}
\begin{figure}
    \includegraphics[width=0.263\linewidth]{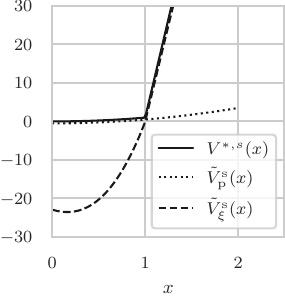}
    \hfill
    \includegraphics[width=0.3\linewidth]{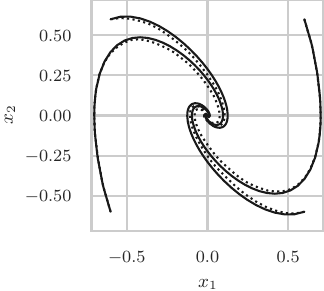}
    \hfill
    \includegraphics[width=0.27\linewidth]{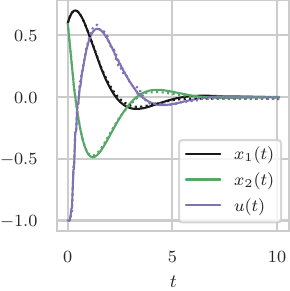}
    \caption{\textit{Left:} Illustrative value function~\eqref{eq:SCMPC_Problem_full} and its approximation using the split defined in~\eqref{eq:value_function_split} to efficiently parametrize sharply increasing curvatures. \textit{Middle/Right:} Simulations according to Section~\ref{sec:numerical_example} under the MPC~\eqref{eq:SCMPC_Problem_full} (solid line) and its approximation~\eqref{eq:ApproxControl} (dashed line), obtained as described in Section~\ref{sec:mpc_value_function_approximation} using 729 data samples. \textit{Middle:} Phase trajectories for different initial conditions. \textit{Right:} Specific state and input trajectory.}\label{fig:illustrations_experiments}
\end{figure}
Theorems~\ref{th:ISS_ApproxControl} and \ref{th:constrSat} rely on a sufficiently accurate approximation of the optimal value function to achieve good convergence and constraint satisfaction guarantees.
The first step toward approximating the optimal MPC value function~\eqref{eq:SCMPC_Problem_full} is to generate a training data set of the form $\mathcal D=\{(x^{(j)}, u^{*,(j)}_{.|0}, \xi^{*,(j)}_{.|0}, x^{*,(j)}_{.|0})\}_{j=1}^{n_{\mathcal D}}$ with $x^{(j)}\in \mathcal X \cap \mathcal X_{N,s}^\eta$ selected by sampling or gridding.
Based on $\mathcal D$, the main approximation challenge we address in the following are rapidly increasing values of $V^{*, \mathrm{s}}(x)$ when $x\notin\mathcal X^\eta_{N}$, i.e., when slack penalties become active. More precisely, a direct approximation of~$V^{*, \mathrm{s}}(x)$ would require function approximators that support steep curvature changes as depicted in Figure~\ref{fig:illustrations_experiments} (left), which typically require large amounts of training samples. 
We avoid this problem by considering the split
\begin{align}\label{eq:value_function_split}
    V^{*, \mathrm{s}}(x) =
    \underbrace{V_{\mathrm f}(x_{N|0}^{*,(j)}) +  \Sigma_{i = 0}^{N-1} l(x_{i|0}^{*,(j)},u_{i|0}^{*,(j)})}_{\approx \tilde V_{\mathrm{p}}^{\mathrm{s}}(x;\theta_{\mathrm{p}})}
    +
    \underbrace{\rho\Vert \xi_{N|0}^{*,(j)} \Vert +\Sigma_{i = 0}^{N-1} \rho\Vert \xi_{i|0}^{*,(j)}+\xi_{N|0}^{*,(j)}\Vert}_{\approx\tilde V_{\mathrm{\xi}}^{\mathrm{s}}(x;\theta_{\mathrm{\xi}})}
\end{align}
into distinctive function approximators $\tilde V_{\mathrm{p}}^{\mathrm{s}}(x;\theta_{\mathrm{p}})$ and $\tilde V_{\mathrm{\xi}}^{\mathrm{s}}(x;\theta_{\mathrm{\xi}})$ with parameters $\theta_{\mathrm{p}}$ and $\theta_{\mathrm{\xi}}$.
The split allows selecting a smooth model class $\tilde V_{\mathrm{p}}^{\mathrm{s}}(x;\theta_{\mathrm{\xi}})$, whose parameters are inferred using $\mathcal D^{\mathrm{p}}=\{(x^{(j)}, V_{\mathrm{p}}^{*, \mathrm{s}}(x^{(j)}))\}_{j=1}^{n_{\mathcal D}}$, which can be generated from $\mathcal D$.
The second term can efficiently be approximated by exploiting non-negativity, i.e., selecting $\tilde V_{\mathrm{\xi}}^{\mathrm{s}}(x;\theta_{\mathrm{\xi}}) = \max(0, \hat V_{\mathrm{\xi}}^{\mathrm{s}}(x;\theta_{\mathrm{\xi}}))$ with some function $\hat V_{\mathrm{\xi}}^{\mathrm{s}}: \mathbb R^{n} \rightarrow \mathbb R$,
which is fitted to $\mathcal D^{\mathrm{\xi}}=\{(x^{(j)}, V_{\mathrm{\xi}}^{*, \mathrm{s}}(x^{(j)}))\}_{j=1}^{n_{\mathcal D}}$.
This reformulation completely circumvents the problem of learning rapidly changing values of $ V^{*, \mathrm{s}}(x)$ as depicted in Figure~\ref{fig:illustrations_experiments} (left). In order to achieve the required approximation accuracy according to Theorem~\ref{th:constrSat}, existing iterative procedures for verification and refinement as proposed, e.g., in \cite[Alg. 1, Steps 4-6]{hertneck2018learning} can be applied.
\section{Numerical Example}\label{sec:numerical_example}
In this section, we illustrate the design steps of the MPC~\eqref{eq:SCMPC_Problem_full} and its approximation~\eqref{eq:ApproxControl} using a nonlinear mass-spring-damper system with state $x(k)=[x_1(k), x_2(k)]^\top \in \mathcal X$, $\mathcal X = [-1,1]^2$ and input $u(k)\in[-1,1]$. The discretized dynamics using Euler-forward with zero-order hold sampling time $T_s = 0.1~s$ are given by $x_1(k+1)=x_1(k) + T_s x_2(k)$, $x_2(k+1) = x_2(k) - T_s(x_1(k)+x_1(k)^3 + 0.1 x_2(k)+u(k))$.
The stage cost function is $\ell(x,u)=x^\top x + u^2$. We construct a terminal set and cost function according to Assumption~\ref{ass:TerminalSetSCMPC} using an LQR controller $K(x)=K_{\mathrm{LQR}}x$ with $\mathcal X_{\mathrm{f}} =\{x\in\mathbb R^2|x^\top P x\leq 1\}$, where $x^\top Px$ is a scaled LQR value function according to~\cite[Alg. 2]{carron2020model}. The planning horizon is $N=50$ and the constraint tightening $\eta$ is selected as $0.3$ with slack penalty $\ell_\xi(\xi)=\rho \xi$ with $\rho=1000$.
A conservative upper bound on $V_{\mathrm{max}}$ can be computed as $N(2 + 1) + \Vert P \Vert=154$, resulting in a minimum approximation accuracy of $\hat \varepsilon \leq 36.5$ to satisfy the conditions in Theorem~\ref{th:constrSat}. Regarding Assumption~\ref{ass:stageCost_Error} we have for all $x\in\mathcal X \setminus \mathcal X_N^\eta$ that $l(x,0)\geq 0.7^2$ resulting in the condition $\hat \varepsilon \leq 0.25$. Following Section~\ref{sec:mpc_value_function_approximation}, we generate a dataset $\mathcal D$ using a uniform grid in $\mathcal X$ with $n_{\mathcal D}=729$ samples.
We chose two artificial neural networks with identical structure to model $\tilde V_{\mathrm{p}}^{\mathrm{s}}(x;\theta_{\mathrm{p}})$ and $\tilde V_{\xi}^{\mathrm{s}}(x;\theta_{\xi})$ with two hidden ReLU layers containing 1024 neurons each. By minimizing the mean-squared error w.r.t. the data sets $\mathcal D^{\mathrm{p}}$ and $\mathcal D^\xi$ using PyTorch~\citep{paszke2017automatic} and its ``adam'' optimizer, we obtained a validation error on $1000$ random samples of $\hat \varepsilon \approx 0.0011$, satisfying the theoretical bound $\hat \varepsilon \leq 0.25$. Evaluation of the approximate control law~\eqref{eq:ApproxControl} is done using a uniform input gridding in the interval $[-1,1]$ with $100$ points, resulting in 1ms  evaluation time on a standard notebook without GPU acceleration.
\par
Figure~\ref{fig:illustrations_experiments} (Middle) illustrates closed-loop simulations under the original MPC controller~\eqref{eq:SCMPC_Problem_full} and its approximation~\eqref{eq:ApproxControl} starting in $\mathcal X_N^\eta$. In addition to ISS and constraint satisfaction according to Theorems~\ref{th:ISS_ApproxControl} and \ref{th:constrSat}, the simulation under the approximate control law displays only small deviations from the original controller without leaving $\mathcal X_N^\eta$. As shown for one specific state and input trajectory in Figure~\ref{fig:illustrations_experiments} (Right), the input trajectories are similar as well, even close to the constraints.

\bibliography{bibliography}

\begin{thebibliography}{27}
\providecommand{\natexlab}[1]{#1}
\providecommand{\url}[1]{\texttt{#1}}
\expandafter\ifx\csname urlstyle\endcsname\relax
  \providecommand{\doi}[1]{doi: #1}\else
  \providecommand{\doi}{doi: \begingroup \urlstyle{rm}\Url}\fi

\bibitem[Abu-Ali et~al.(2022)Abu-Ali, Berkel, Manderla, Reimann, Kennel, and
  Abdelrahem]{abu2022deep}
Mohammad Abu-Ali, Felix Berkel, Maximilian Manderla, Sven Reimann, Ralph
  Kennel, and Mohamed Abdelrahem.
\newblock Deep learning-based long-horizon {MPC}: Robust, high performing, and
  computationally efficient control for {PMSM} drives.
\newblock \emph{IEEE Transactions on Power Electronics}, 37\penalty0
  (10):\penalty0 12486--12501, 2022.

\bibitem[Alessio and Bemporad(2009)]{alessio2009survey}
Alessandro Alessio and Alberto Bemporad.
\newblock A survey on explicit model predictive control.
\newblock \emph{Nonlinear Model Predictive Control: Towards New Challenging
  Applications}, pages 345--369, 2009.

\bibitem[Bemporad et~al.(2002)Bemporad, Morari, Dua, and
  Pistikopoulos]{bemporad2002explicit}
Alberto Bemporad, Manfred Morari, Vivek Dua, and Efstratios~N. Pistikopoulos.
\newblock The explicit linear quadratic regulator for constrained systems.
\newblock \emph{Automatica}, 38\penalty0 (1):\penalty0 3--20, 2002.

\bibitem[Bertsekas(2022)]{bertsekas2022abstract}
Dimitri Bertsekas.
\newblock \emph{Abstract Dynamic Programming: 3rd Edition}.
\newblock Athena scientific optimization and computation series. Athena
  Scientific., 2022.
\newblock ISBN 978-1-886529-47-2.

\bibitem[Boyd and Vandenberghe(2004)]{boyd2004convex}
Stephen~P. Boyd and Lieven Vandenberghe.
\newblock \emph{Convex optimization}.
\newblock Cambridge University Press, 2004.

\bibitem[Carron and Zeilinger(2020)]{carron2020model}
Andrea Carron and Melanie~N Zeilinger.
\newblock Model predictive coverage control.
\newblock \emph{IFAC-PapersOnLine}, 53\penalty0 (2):\penalty0 6107--6112, 2020.

\bibitem[Cortez et~al.(2022)Cortez, Drgona, Tuor, Halappanavar, and
  Vrabie]{Drgona2022CBF}
Wenceslao~S. Cortez, Jan Drgona, Aaron Tuor, Mahantesh Halappanavar, and
  Draguna Vrabie.
\newblock Differentiable predictive control with safety guarantees: A control
  barrier function approach.
\newblock In \emph{2022 IEEE 61st Conference on Decision and Control (CDC)},
  pages 932--938, 2022.
\newblock \doi{10.1109/CDC51059.2022.9993146}.

\bibitem[Drgo{\v{n}}a et~al.(2022)Drgo{\v{n}}a, Ki{\v{s}}, Tuor, Vrabie, and
  Klau{\v{c}}o]{drgovna2022differentiable}
J{\'a}n Drgo{\v{n}}a, Karol Ki{\v{s}}, Aaron Tuor, Draguna Vrabie, and Martin
  Klau{\v{c}}o.
\newblock Differentiable predictive control: Deep learning alternative to
  explicit model predictive control for unknown nonlinear systems.
\newblock \emph{Journal of Process Control}, 116:\penalty0 80--92, 2022.

\bibitem[Gal and Ghahramani(2016)]{gal2016dropout}
Yarin Gal and Zoubin Ghahramani.
\newblock Dropout as a bayesian approximation: Representing model uncertainty
  in deep learning.
\newblock In \emph{international conference on machine learning}, pages
  1050--1059. PMLR, 2016.

\bibitem[Granzotto et~al.(2021)Granzotto, Postoyan, Ne{\v{s}}i{\'c},
  Bu{\c{s}}oniu, and Daafouz]{granzotto2021stop}
Mathieu Granzotto, Romain Postoyan, Dragan Ne{\v{s}}i{\'c}, Lucian
  Bu{\c{s}}oniu, and Jamal Daafouz.
\newblock When to stop value iteration: stability and near-optimality versus
  computation.
\newblock In \emph{Learning for Dynamics and Control}, pages 412--424. PMLR,
  2021.

\bibitem[Hertneck et~al.(2018)Hertneck, K{\"o}hler, Trimpe, and
  Allg{\"o}wer]{hertneck2018learning}
Michael Hertneck, Johannes K{\"o}hler, Sebastian Trimpe, and Frank
  Allg{\"o}wer.
\newblock Learning an approximate model predictive controller with guarantees.
\newblock \emph{IEEE Control Systems Letters}, 2\penalty0 (3):\penalty0
  543--548, 2018.

\bibitem[Hrovat et~al.(2012)Hrovat, Di~Cairano, Tseng, and
  Kolmanovsky]{hrovat2012development}
Davor Hrovat, Stefano Di~Cairano, H.~Eric Tseng, and Ilya~V. Kolmanovsky.
\newblock The development of model predictive control in automotive industry: A
  survey.
\newblock In \emph{2012 IEEE International Conference on Control Applications},
  pages 295--302, 2012.

\bibitem[Jiang and Wang(2001)]{jiang2001input}
Zhong-Ping Jiang and Yuan Wang.
\newblock Input-to-state stability for discrete-time nonlinear systems.
\newblock \emph{Automatica}, 37\penalty0 (6):\penalty0 857--869, 2001.

\bibitem[Karamanakos et~al.(2020)Karamanakos, Liegmann, Geyer, and
  Kennel]{karamanakos2020model}
Petros Karamanakos, Eyke Liegmann, Tobias Geyer, and Ralph Kennel.
\newblock Model predictive control of power electronic systems: Methods,
  results, and challenges.
\newblock \emph{IEEE Open Journal of Industry Applications}, 1:\penalty0
  95--114, 2020.

\bibitem[Karg and Lucia(2020)]{karg2020efficient}
Benjamin Karg and Sergio Lucia.
\newblock Efficient representation and approximation of model predictive
  control laws via deep learning.
\newblock \emph{IEEE Transactions on Cybernetics}, 50\penalty0 (9):\penalty0
  3866--3878, 2020.

\bibitem[Kerrigan and Maciejowski(2000)]{kerrigan2000soft}
Eric~C. Kerrigan and Jan~M. Maciejowski.
\newblock Soft constraints and exact penalty functions in model predictive
  control.
\newblock In \emph{Proceedings of United Kingdom Automatic Control Council
  (UKACC) International Conference on Control}. IET, 2000.

\bibitem[Lucia and Karg(2018)]{lucia2018deep}
Sergio Lucia and Benjamin Karg.
\newblock A deep learning-based approach to robust nonlinear model predictive
  control.
\newblock \emph{IFAC-PapersOnLine}, 51\penalty0 (20):\penalty0 511--516, 2018.

\bibitem[Maddalena et~al.(2021)Maddalena, Scharnhorst, and
  Jones]{maddalena2021deterministic}
Emilio~T. Maddalena, Paul Scharnhorst, and Colin~N. Jones.
\newblock Deterministic error bounds for kernel-based learning techniques under
  bounded noise.
\newblock \emph{Automatica}, 134:\penalty0 109896, 2021.

\bibitem[Mayne et~al.(2000)Mayne, Rawlings, Rao, and Scokaert]{MAYNE2000MPC}
David~Q. Mayne, James~B. Rawlings, Christopher~V. Rao, and Pierre~O.M.
  Scokaert.
\newblock Constrained model predictive control: Stability and optimality.
\newblock \emph{Automatica}, 36\penalty0 (6):\penalty0 789--814, 2000.
\newblock \doi{https://doi.org/10.1016/S0005-1098(99)00214-9}.

\bibitem[Nubert et~al.(2020)Nubert, K{\"o}hler, Berenz, Allg{\"o}wer, and
  Trimpe]{nubert2020safe}
Julian Nubert, Johannes K{\"o}hler, Vincent Berenz, Frank Allg{\"o}wer, and
  Sebastian Trimpe.
\newblock Safe and fast tracking on a robot manipulator: Robust mpc and neural
  network control.
\newblock \emph{IEEE Robotics and Automation Letters}, 5\penalty0 (2):\penalty0
  3050--3057, 2020.

\bibitem[Paszke et~al.(2017)Paszke, Gross, Chintala, Chanan, Yang, DeVito, Lin,
  Desmaison, Antiga, and Lerer]{paszke2017automatic}
Adam Paszke, Sam Gross, Soumith Chintala, Gregory Chanan, Edward Yang, Zachary
  DeVito, Zeming Lin, Alban Desmaison, Luca Antiga, and Adam Lerer.
\newblock Automatic differentiation in {PyTorch}.
\newblock 2017.

\bibitem[Powell(2007)]{powell2007approximate}
Warren~B. Powell.
\newblock \emph{Approximate Dynamic Programming: Solving the curses of
  dimensionality}, volume 703.
\newblock John Wiley \& Sons, 2007.

\bibitem[Rawlings et~al.(2017)Rawlings, Mayne, and Diehl]{rawlings2017model}
James~B. Rawlings, David~Q. Mayne, and Moritz Diehl.
\newblock \emph{Model predictive control: theory, computation, and design},
  volume~2.
\newblock Nob Hill Publishing Madison, WI, 2017.

\bibitem[Rosolia et~al.(2017)Rosolia, Carvalho, and
  Borrelli]{rosolia2017Racing}
Ugo Rosolia, Ashwin Carvalho, and Francesco Borrelli.
\newblock Autonomous racing using learning model predictive control.
\newblock In \emph{2017 American Control Conference (ACC)}, pages 5115--5120,
  2017.
\newblock \doi{10.23919/ACC.2017.7963748}.

\bibitem[Wabersich et~al.(2022)Wabersich, Krishnadas, and
  Zeilinger]{wabersich2022SCMPC_Learning}
Kim~P. Wabersich, Raamadaas Krishnadas, and Melanie~N. Zeilinger.
\newblock A soft constrained {MPC} formulation enabling learning from
  trajectories with constraint violations.
\newblock \emph{IEEE Control Systems Letters}, 6:\penalty0 980--985, 2022.
\newblock \doi{10.1109/LCSYS.2021.3087968}.

\bibitem[Zeilinger et~al.(2010)Zeilinger, Jones, and
  Morari]{Zeilinger2010SCMPC}
Melanie~N. Zeilinger, Colin~N. Jones, and Manfred Morari.
\newblock Robust stability properties of soft constrained {MPC}.
\newblock In \emph{49th IEEE Conference on Decision and Control (CDC)}, pages
  5276--5282, 2010.
\newblock \doi{10.1109/CDC.2010.5717488}.

\bibitem[Zeilinger et~al.(2014)Zeilinger, Morari, and
  Jones]{zeilinger2014SCMPC}
Melanie~N. Zeilinger, Manfred Morari, and Colin~N. Jones.
\newblock Soft constrained model predictive control with robust stability
  guarantees.
\newblock \emph{IEEE Transactions on Automatic Control}, 59\penalty0
  (5):\penalty0 1190--1202, 2014.
\newblock \doi{10.1109/TAC.2014.2304371}.

\end{thebibliography}

\appendix

\section{Proofs}\label{app:proof_of_stuff}
\subsection{Proof of Proposition~\ref{prop:SCMPCStability}}
Due to Assumption\,\ref{ass:TerminalSetSCMPC} the scaled terminal set is invariant under the terminal controller for all $\alpha\in[0,1]$. Thus, one can use the same candidate function as in the proof in \cite{Zeilinger2010SCMPC} with the non-linear terminal control law to show that the value function satisfies 
\begin{equation} \label{eq:app_SCMPC_descend}
            V^{*, \mathrm{s}}(f(x,\pi_\mathrm{MPC}^\mathrm{s}(x))) - V^{*, \mathrm{s}}(x) \le - \ell(x,\pi_\mathrm{MPC}^\mathrm{s}(x)) + \underbrace{\ell_\xi(\xi^*_{N \vert k}) - \ell_\xi(\xi_{0 \vert k}^*+ \xi_{N \vert k}^*)}_{\le 0}.
\end{equation}
From here, asymptotic stability of the origin of the closed-loop can be concluded using standard MPC theory~\citep{rawlings2017model}. \qed
 
\subsection{Proof of Theorem\,\ref{thm:LipschitzVFcn}} \label{app:ProveLipschitz}
\begin{lemma}\label{lem:app_alternativeCoupling}
    The constraint~\eqref{eq:SCMPCconstr_TerminalCoupling} can be stated explicitly as
        \begin{equation}\label{eq:app_SCMPCalternativeCoupling}
            \sqrt{\alpha h_\mathrm{f}} \sqrt{c_j} \le [1 - \eta + \xi_{N\vert k}]_j  \quad \forall j = 1, \dots, n_x,
        \end{equation}
        where $c_j = [H_x]_jP^{-1}[H_x]_j^\top$ and $[\cdot]_j$ selects the $j$-th row of its argument.
\end{lemma}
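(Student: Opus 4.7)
The plan is to reduce the set inclusion in~\eqref{eq:SCMPCconstr_TerminalCoupling} to a row-wise inequality by evaluating the support function of the scaled terminal ellipsoid in the direction of each row of $H_x$. Concretely, using the ellipsoidal description $\mathcal{X}_\mathrm{f} = \{x \in \mathbb{R}^{n_x} \mid x^\top P x \le h_\mathrm{f}\}$, the scaled set reads $\alpha \mathcal{X}_\mathrm{f} = \{x \mid x^\top P x \le \alpha h_\mathrm{f}\}$ under the scaling convention of~\eqref{eq:SCMPCconstr_Terminal}. Thus~\eqref{eq:SCMPCconstr_TerminalCoupling} is equivalent to: for every $x$ with $x^\top P x \le \alpha h_\mathrm{f}$ and every row $j$, one has $[H_x]_j x \le [1 - \eta + \xi_{N|k}]_j$.

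Next, I would separate the row index and rewrite this condition as
\[
\max\{[H_x]_j x \colon x^\top P x \le \alpha h_\mathrm{f}\} \;\le\; [1 - \eta + \xi_{N|k}]_j \qquad \forall j.
\]
The core step is to evaluate the left-hand side in closed form. Using the change of variables $x = P^{-1/2} z$, the feasible set becomes the Euclidean ball $\{z \mid \|z\|_2^2 \le \alpha h_\mathrm{f}\}$, and the objective becomes $(P^{-1/2}[H_x]_j^\top)^\top z$. An application of Cauchy--Schwarz then gives
\[
\max\{[H_x]_j x \colon x^\top P x \le \alpha h_\mathrm{f}\} \;=\; \sqrt{\alpha h_\mathrm{f}}\,\sqrt{[H_x]_j P^{-1}[H_x]_j^\top} \;=\; \sqrt{\alpha h_\mathrm{f}}\,\sqrt{c_j},
\]
with the maximizer $x^\star = \sqrt{\alpha h_\mathrm{f}}\, P^{-1}[H_x]_j^\top / \sqrt{c_j}$ certifying tightness. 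Substituting this into the row-wise inequality yields precisely~\eqref{eq:app_SCMPCalternativeCoupling}.

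I do not expect a serious obstacle here: the argument is a routine support-function computation for an ellipsoid, relying only on $P \succ 0$ (so that $P^{-1/2}$ is well defined) and on the elementary equivalence between a polyhedral inclusion and its row-wise halfspace form. The only points requiring mild care are making explicit that the inclusion decouples across rows of $H_x$ and that the scaling convention in the paper treats $\alpha \mathcal{X}_\mathrm{f}$ as the sublevel set $\{x^\top P x \le \alpha h_\mathrm{f}\}$, which explains the appearance of $\sqrt{\alpha h_\mathrm{f}}$ rather than $\alpha \sqrt{h_\mathrm{f}}$ on the left-hand side.
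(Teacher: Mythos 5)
Your proposal is correct, and it reaches the same closed-form support function value $\sqrt{\alpha h_\mathrm{f}}\sqrt{c_j}$ as the paper, but by a genuinely different route. The paper sets up the support function $h([H_x]_j,\alpha)=\max\{[H_x]_j x : x^\top P x \le \alpha h_\mathrm{f}\}$ and evaluates it via Lagrangian duality: it forms the Lagrangian, substitutes the stationary point $x^*=\tfrac{1}{2\nu}P^{-1}[H_x]_j^\top$, minimizes the dual function over $\nu\ge 0$, and then invokes Slater's condition (using $h_\mathrm{f}>0$ so the ellipsoid has nonempty interior) to conclude that the dual optimal value equals the primal one. You instead whiten the ellipsoid with $x=P^{-1/2}z$ and apply Cauchy--Schwarz on the resulting Euclidean ball, exhibiting the explicit maximizer $x^\star=\sqrt{\alpha h_\mathrm{f}}\,P^{-1}[H_x]_j^\top/\sqrt{c_j}$ to certify tightness. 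Your argument is the more elementary and self-contained of the two: it needs only $P\succ 0$ and avoids any appeal to strong duality, and the explicit maximizer makes the attainment of the bound immediate rather than inferred from zero duality gap. The paper's duality computation buys nothing extra here beyond fitting the style of the reference it follows. Both proofs correctly use the row-wise decoupling of the polyhedral inclusion and the same scaling convention $\alpha\mathcal{X}_\mathrm{f}=\{x : x^\top P x\le \alpha h_\mathrm{f}\}$, which you rightly flag as the reason $\sqrt{\alpha}$ rather than $\alpha$ appears on the left-hand side.
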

\begin{proof}
    Similarly as done in \cite{wabersich2022SCMPC_Learning} we use the support function 
    \begin{equation}\label{eq:app_supportFCN}
        h([H_x]_j, \alpha) \triangleq \max_{x \in \mathbb{R}^{n_x}} [H_x]_jx \quad \text{s.t. }  x^\top P x \le \alpha h_\mathrm{f} 
    \end{equation}
    to translate the subset condition~\eqref{eq:SCMPCconstr_TerminalCoupling} into 
    \begin{equation*}
        h([H_x]_j, \alpha) \le [1 -\eta + \xi_{N\vert k}]_j \quad \forall j=1,\dots,n_x.
    \end{equation*}
    We show Lemma\,\ref{lem:app_alternativeCoupling} by deriving the analytical solution the optimization problem in~\eqref{eq:app_supportFCN}.
	The Lagrange function to the optimization problem~\eqref{eq:app_supportFCN} is given by 
	\begin{equation}
	L(x, \nu) = - [H_x]_j x + \nu (x^\top P x - \alpha h_\mathrm{f}).
	\end{equation}
	The dual function is given by $d(\nu) = \min_x L(x, \nu)$, with $\nu \ge 0$.
    Plugging in the minimizer $x^* = \frac{1}{2 \nu} P^{-1} [H_x]_j^\top$ yields
    \begin{equation}
	d(\nu) = - \frac{1}{4 \nu} [H_x]_j P^{-1} [H_x]_j^\top -\nu \alpha h_\mathrm{f}.
	\end{equation}
    By minimizing the dual function with respect to $\nu$ we obtain 
    \begin{equation}
        d(\nu^*)= -  \sqrt{\alpha h_\mathrm{f}} \sqrt{c_j}  \label{eq:app_DualSol_helper1},
    \end{equation}
    where $c_j = [H_x]_j P^{-1} [H_x]_j^\top$.
    Since $h_\mathrm{f}>0$ it follows that the interior of the terminal set $\mathcal{X}_\mathrm{f}$ is non-empty. Thus, because the primal problem~\eqref{eq:app_supportFCN} is convex, we can use Slater's condition to show that strong duality holds and, therefore, the optimal value of the dual problem is equal to the optimal value of the primal problem~\citep{boyd2004convex}.
 \end{proof}
From here, the proof of Theorem\,\ref{thm:LipschitzVFcn} follows from eliminating all relevant state constraints into~\eqref{eq:SCMPC_Cost} as follows.
By Lemma\,\ref{lem:app_alternativeCoupling} we can substitute the constraint~\eqref{eq:SCMPCconstr_TerminalCoupling} by~\eqref{eq:app_SCMPCalternativeCoupling}.
Now we can rewrite~\eqref{eq:SCMPC_Problem_full} by plugging~\eqref{eq:SCMPCconstr_scaling} (except for $\alpha \leq 1$),~\eqref{eq:SCMPCconstr_Terminal},~\eqref{eq:SCMPCconstr_state}  and~\eqref{eq:app_SCMPCalternativeCoupling} into~\eqref{eq:SCMPC_Cost}. The resulting optimization problem is equivalent for cases, where the optimal solution satisfies $\alpha \leq 1$ in~\eqref{eq:SCMPCconstr_scaling}, but it is no longer state constrained. Thus we can apply Theorem C.29 from \cite{rawlings2017model} to show Lipschitz continuity of the value function.

\subsection{Proof of Lemma\,\ref{lemma:Error_diffU}} \label{subsec:app_ProofLemma}
\begin{proof}
	To prove Lemma\,\ref{lemma:Error_diffU} we start from~\eqref{eq:ApproxControl} and relate the approximation of the optimal value function $\tilde V^\mathrm{s}(x)$ to the true optimal value function $V^{*, \mathrm{s}}(x)$ by using the approximation error~\eqref{eq:errorBound} and choosing $u^*(x)=\pi_{\mathrm{MPC}}^{\mathrm{s}}(x)$~\eqref{eq:SCMPC_policy} as a feasible but suboptimal solution to the dynamic programming control law.
	The optimal value of the optimization problem underlying the approximate control law~\eqref{eq:ApproxControl} can be reformulated and upper bounded as  
	\begin{align}
	l(x, \tilde u) + \tilde V^\mathrm{s} (f(x, \tilde u)) &= \min_{u \in \mathcal{U}} l(x, u) + \tilde V^\mathrm{s} (f(x, u)) \label{eq:app_InterLemmaHelper1}\\
	&= \min_{\substack{u \in \mathcal{U} \\ f(x, u) \in \mathcal{X}^\eta_{N, \mathrm{s}}}}  l(x, u) + \tilde V^\mathrm{s} (f(x, u)) \label{eq:app_InterLemmaHelper2}\\
	&\le \min_{\substack{u \in \mathcal{U} \\ f(x, u) \in \mathcal{X}^\eta_{N, \mathrm{s}}}}  l(x, u) +  V^{*,\mathrm{s}} (f(x, u)) + \vert \varepsilon(f(x, u)) \vert  \label{eq:app_InterLemmaHelper3}.
	\end{align}
	Hereby~\eqref{eq:app_InterLemmaHelper1} and~\eqref{eq:app_InterLemmaHelper2} are equal because the minimizer lies in $\mathcal{X}_{N, \mathrm{s}}^\eta$ by Assumption\,\ref{ass:invariance}. The upper bound in~\eqref{eq:app_InterLemmaHelper3} follows using the error definition in~\eqref{eq:errorBound}. Since the true value function $V^{*, \mathrm{s}}$ is only defined on $\mathcal{X}^\eta_{N, \mathrm{s}}$, it is necessary to introduce the second constraint $f(x, u) \in \mathcal{X}^\eta_{N, \mathrm{s}}$ for the objective function to be defined for every feasible optimization variable. 
	Since recursive feasibility of~\eqref{eq:SCMPC_Problem_full} is ensured by Proposition~\ref{prop:SCMPCStability}, it holds that
    \begin{equation}
		x \in \mathcal{X}^\eta_{N, \mathrm{s}} \implies f(x, u^*(x)) \in \mathcal{X}^\eta_{N, \mathrm{s}},
	\end{equation}
 and one can upper bound~\eqref{eq:app_InterLemmaHelper3} by plugging in the suboptimal candidate solution $u^*(x)$ into~\eqref{eq:app_InterLemmaHelper3} to obtain 
	\begin{equation}
	\begin{aligned}
	l(x, \tilde u) + \tilde V^\mathrm{s}(&f(x, \tilde u)) \\ &\le l(x, u^*) + V^{*, \mathrm{s}}(f(x, u^*)) + \vert \varepsilon(f(x, u^*)) \vert,
	\end{aligned}
	\end{equation}
	which concludes the proof.
\end{proof}

\section{Example of constraint violations}\label{ex:app_constraintViolation}
The following example illustrates, why the proposed control scheme \eqref{eq:ApproxControl} can lead to constraint violations when no constraint tightening is deployed.
	Consider a linear time-invariant system with state-space dimension $n = 2$ and input dimension $m = 1$ given by 
	\begin{equation}
		\dot x = Ax + Bu = \begin{bmatrix}
		0 & 0 \\ 0 & -1 
		\end{bmatrix} x + \begin{bmatrix}
		1 \\ 2
		\end{bmatrix} u.
	\end{equation}
	The system is discretized using the Euler method with step size $h = 0.01$ to obtain the discrete-time system
	\begin{equation}
		x(k+1) = (I + h A) x(k) + h B u(k). 
	\end{equation}
	The absolute value of each of the states is constrained to be smaller than $1$, i.\,e.,
	\begin{equation}\label{eq:app_ex1_stateconstr}
		x \in \mathcal{X} = \Big\{x \in \mathbb{R}^2: x \in [-1, 1]\times [-1, 1] \Big\}.
	\end{equation}
	The input is constrained by $u \in [-10, 10]$. 
	The stage cost is chosen quadratic, i.\,e., $l(x, u) = \Vert x \Vert_Q^2 + \Vert u \Vert_R^2$ with $Q = I$ and $R = 1$. To guarantee recursive feasibility and stability of the \ac{MPC} a terminal controller and a terminal set as well as a terminal cost are needed. The terminal controller is derived as a linear quadratic regulator with the same stage cost as the \ac{MPC}.
    Solving the LQR problem also yields the matrix $P_\mathrm{LQR}$. The terminal cost can then be constructed as $V_{\mathrm{f}}(x) =  x^\top P_{\mathrm{LQR}} x$ and yields an asymptotically stable closed-loop. Further, the terminal set is constructed using the LMI solver SeDuMi, to obtain the terminal set $\mathcal{X}_{\mathrm{f}} = \{x: x^\top P x \le 1\}$ with $P = I$. To obtain the soft constrained \ac{MPC} setup \eqref{eq:SCMPC_Problem_full}, the enlarged terminal set with $P = \frac{1}{4} I $ is used. That is, the set of feasible state configurations of the hard constrained problem is fully contained in the enlarged terminal set. Further, the horizon of the \ac{MPC} is chosen as $N = 80$. Choosing the large horizon makes sure that every state configuration satisfying the state constraints \eqref{eq:app_ex1_stateconstr} is feasible for the hard constrained problem, i.\,e. $\mathcal{X}_N = \mathcal{X}$.
	A linear penalty function of the form 
	\begin{equation}
		l_\xi(\xi) = \rho \Vert \xi \Vert_1
	\end{equation}
with penalty weight $\rho = 5$ is used.
Figure\,\ref{fig:app_constrVio} shows the closed-loop trajectories of the \ac{MPC}, soft constrained \ac{MPC} as well as the control law 
\begin{equation}
	u(x)^* = \argmin_u l(x, u) + V^{*, \mathrm{s}}(f(x, u)) \label{eq:app_ex1_AlterControlLaw},
\end{equation} 
starting at the initial condition $x(0) = \begin{bmatrix}
-1 & 1
\end{bmatrix}^\top$. Observe that in \eqref{eq:app_ex1_AlterControlLaw} the value function of the soft constrained \ac{MPC} problem is used instead of an approximation, in order to show that even in the nominal case constraint violations can arise.
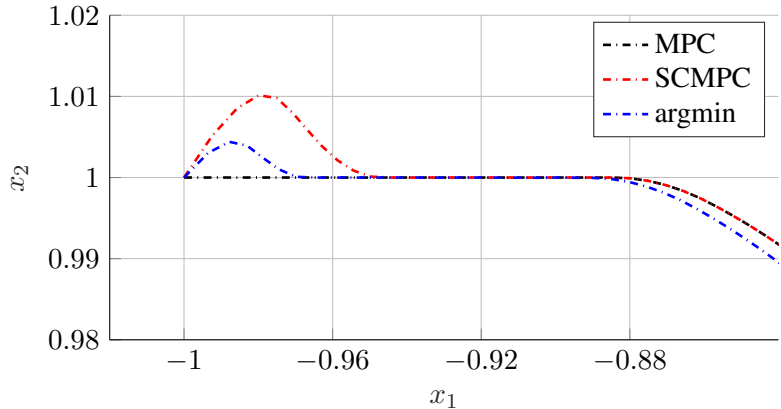
\begin{figure}[ht]
	\centering
        \begin{tikzpicture}[scale=1.0]
%
%

%

\begin{axis}[%
width=3.5in,
height=1.7in,
at={(2.6in,1.103in)},
scale only axis,
xmin=-1.02,
xmax=-0.84,
xlabel style={font=\color{white!15!black}},
xlabel={$x_1$},
ymin=0.98,
ymax=1.02,
ylabel style={font=\color{white!15!black}},
ylabel={$x_2$},
axis background/.style={fill=white},
axis x line*=bottom,
axis y line*=left,
xmajorgrids,
ymajorgrids,
xtick = {-1, -0.96, -0.92, -0.88},
legend style={legend cell align=left, align=left, draw=white!15!black}
]
\addplot [color=black, dashdotted, line width=1pt]
  table[row sep=crcr]{%
-1	1\\
-0.955000014976204	1.00000001228682\\
-0.950003960252132	0.999992121612095\\
-0.945000038165639	1.00000004456896\\
-0.885000083620612	1.00000000542287\\
-0.880042208048864	0.999915756512136\\
-0.875197972653161	0.999605069738421\\
-0.870458906311463	0.999087151724433\\
-0.865817372498886	0.998379347832343\\
-0.861266487618941	0.997497324113908\\
-0.856800039338927	0.996455247432797\\
-0.852378535242146	0.995333703152033\\
-0.84799626873179	0.994144899141223\\
-0.843652576347091	0.992890834919209\\
-0.839346809219254	0.991573460825692\\
};
\label{fig:Ex_MPC}
\addlegendentry{MPC}

\addplot [color=red, dashdotted, line width=1pt]
  table[row sep=crcr]{%
-1	1\\
-0.992429470965891	1.00514105806822\\
-0.985709583995503	1.00852942142831\\
-0.979861018699345	1.01014125780634\\
-0.974989596417555	1.00978268979186\\
-0.970831129094365	1.00800179754032\\
-0.96684829018235	1.00588745738895\\
-0.962837736951202	1.00384968927736\\
-0.958682362364687	1.00212194155761\\
-0.95431168744385	1.00084207198371\\
-0.949673657040223	1.00010971207113\\
-0.944728244828824	0.999999439373215\\
-0.884728003667275	0.999999930950447\\
-0.8797765325392	0.999902873897092\\
-0.874938222223655	0.99958046578921\\
-0.870204673837056	0.999051757904517\\
-0.86556823610681	0.998334115785964\\
-0.861022091851636	0.997443063138451\\
-0.856560056783393	0.996392702643553\\
-0.852140679270612	0.99526753064268\\
-0.847760503187024	0.99407520750343\\
-0.843418863267199	0.992817735268045\\
-0.839115114422013	0.991497055605736\\
};
\label{fig:Ex_SCMPC}
\addlegendentry{SCMPC}

\addplot [color=blue, dashdotted, line width=1pt]
  table[row sep=crcr]{%
-1	1\\
-0.99348	1.00304\\
-0.987784	1.0044016\\
-0.983	1.003925584\\
-0.978732	1.00242232816\\
-0.974428	1.0010061048784\\
-0.96988	1.00009204382962\\
-0.964928	0.99999512339132\\
-0.934928	0.999995408769731\\
-0.929932	0.999987454682034\\
-0.924928	0.999995580135214\\
-0.919928	0.999995624333861\\
-0.90494	0.999971993495521\\
-0.899928	0.999996273560566\\
-0.89494	0.999972310824961\\
-0.889948	0.999956587716711\\
-0.885016	0.999821021839544\\
-0.8802	0.999454811621148\\
-0.875488	0.998884263504937\\
-0.870872	0.998127420869887\\
-0.866344	0.997202146661188\\
-0.8619	0.996118125194577\\
-0.857504	0.994948943942631\\
-0.853144	0.993719454503204\\
-0.848824	0.992422259958172\\
-0.84454	0.991066037358591\\
-0.840292	0.989651376985005\\
-0.83608	0.988178863215155\\
};
\label{fig:ExApproxControl}
\addlegendentry{argmin}

\end{axis}
        \end{tikzpicture}
	\caption{Trajectories of the closed-loop with \acs*{MPC} (\ref{fig:Ex_MPC}), soft constrained \acs*{MPC} (\ref{fig:Ex_SCMPC}) and approximating control law \eqref{eq:app_ex1_AlterControlLaw} (\ref{fig:ExApproxControl}).}
	\label{fig:app_constrVio}
\end{figure} 
As to be expected, the closed-loop trajectory of the \ac{MPC} remains within the state constraints and approaches the origin. 
While the soft constrained \ac{MPC} trajectory initially violates the state constraints initially its trajectory approaches the trajectory of the \ac{MPC} after a few time steps and shares the same trajectory from then on. 
The approximating control law \eqref{eq:app_ex1_AlterControlLaw} like the soft constrained \ac{MPC} violates the constraints but has a slightly different trajectory overall.  
This difference can be explained by the differing optimization problems solved to obtain the control laws.  
\end{document}